\newtheorem{Thm}{\textbf{Theorem}}
\newtheorem{Def}{\textbf{Definition}}
\newtheorem{Rmk}{\textbf{Remark}}
\newtheorem{Prob}{\textbf{Problem}}
\newtheorem{Asum}{\textbf{Assumption}}
\renewcommand{\bf}{\mathbf} 
\renewcommand{\rm}{\mathrm} 
\theoremstyle{thmstyleone}
\theoremstyle{thmstyletwo}%
\theoremstyle{thmstylethree}%
\begin{document}

\title[Article Title]{Multi-Robot Cooperative Herding through Backstepping Control Barrier Functions}
\author[1,2]{Kang Li}\email{likang32@westlake.edu.cn} 
\equalcont{These authors contributed equally to this work.}
\author[3]{Ming Li }\email{ming3@kth.se} 
\equalcont{These authors contributed equally to this work.}
\author[2]{Wenkang Ji}\email{jiwenkang@westlake.edu.cn}
\author*[4]{Zhiyong Sun}\email{zhaoshiyu@westlake.edu.cn}
\author*[2]{Shiyu Zhao}\email{zhiyong.sun@pku.edu.cn}

\affil[1]{
  \orgdiv{College of Computer Science and Technology}, 
  \orgname{Zhejiang University}, 
  \orgaddress{
    \city{Hangzhou}, 
    \postcode{310058}, 
    \state{Zhejiang},                
    \country{China}}}
\affil*[2]{
  \orgdiv{Windy Lab}, 
  \orgname{Department of Artificial Intelligence}, 
  \orgname{Westlake University}, 
  \orgaddress{
    \city{Hangzhou}, 
    \postcode{310030}, 
    \state{Zhejiang},             
    \country{China}}}
\affil[3]{
  \orgdiv{Division of Decision and Control Systems}, 
  \orgname{KTH Royal Institute of Technology}, 
  \orgaddress{
    \city{Stockholm}, 
    \postcode{100 44}, 
    \country{Sweden}}}
\affil[4]{
  \orgdiv{Department of Mechanics and Engineering Science}, 
  \orgname{College of Engineering}, 
  \orgname{Peking University}, 
  \orgaddress{
    \city{Beijing}, 
    \postcode{100871}, 
    \state{Beijing}, 
    \country{China}}}

\abstract
{
We propose a novel cooperative herding strategy through backstepping control barrier functions (CBFs), which coordinates multiple herders to herd a group of evaders safely towards a designated goal region.
For the herding system with heterogeneous groups involving herders and evaders, the behavior of the evaders can only be influenced indirectly by the herders' motion, especially when the evaders follow an inverse dynamics model and respond solely to repulsive interactions from the herders. 
This indirect interaction mechanism inherently renders the overall system underactuated. 
To address this issue, we first construct separate CBFs for the dual objectives of goal reaching and collision avoidance, which ensure both herding completion and safety guarantees. 
Then, we reformulate the underactuated herding dynamics into a control-affine structure and employ a backstepping approach to recursively design control inputs for the hierarchical barrier functions, avoiding taking derivatives of the higher-order system.
Finally, we present a cooperative herding strategy based on backstepping CBFs that allow herders to safely herd multiple evaders into the goal region. 
In addition, centralized and decentralized implementations of the proposed algorithm are developed, further enhancing its flexibility and applicability. 
Extensive simulations and real-world experiments validate the effectiveness and safety of the proposed strategy in multi-robot herding.
}

\keywords{herding, cooperative control, backstepping, control barrier functions}

\maketitle

\section{Introduction}
Herding is a common form of collective behavior observed in nature~\cite{king2012selfish, ringhofer2020herding, coppinger2000dogs}, involving one or more herders (e.g., shepherds) directing the motion of evaders (e.g., sheep) through interactions often modeled as repulsive forces~\cite{kubo2022herd, singh2024planar}. 
This process typically aims to herd the evader group from an initial region toward a target destination.
Such behavior exemplifies complex interactions involving both cooperation and antagonism, highlighting challenges in distributed coordination and decision-making. 
These characteristics have inspired extensive research in multi-robot systems, leading to a variety of collective tasks, such as formation~\cite{huang2022distributed, bai2021learning, zhao2018affine, zhao2019bearing}, clustering~\cite{martin2023multi, sun2023mean, li2020distributed}, encirclement~\cite{zhang2022multi}, and herding~\cite{paranjape2018robotic, zhang2024distributed, sebastian2022adaptive}. 
In contrast to these well-explored tasks, this study focuses on the problem of safe swarm herding with multiple herders, which poses significant difficulties due to both its theoretical complexity and practical constraints. 

Herding primarily involves non-cooperative interactions between two heterogeneous groups of herders and evaders.
These groups differ significantly in dynamics, control strategies, and behavioral objectives, which presents several challenges for multi-robot cooperative control. 
First, the heterogeneity in dynamics and coordination~\cite{sebastian2022adaptive} arises from the fact that evaders typically exhibit passive and stimulus-driven behavior while herders are actively controlled, resulting in nonlinearities and structural inconsistencies. 
Second, herding often requires balancing various conflicting objectives, such as goal reaching, inter-evader collision avoidance, and evader containment, which complicates the design of a unified control framework. 
Third, partial observability of states presents an additional difficulty, as herders frequently lack access to complete information about the positions and velocities of the evaders, which requires the use of techniques such as state estimation~\cite{zheng2025optimal} or neighbor selection~\cite{li2024collective} to compensate for information loss. 
Finally, scalability and distributed coordination become critical as the number of agents increases, requiring control strategies that are both scalable and decentralized to ensure efficient cooperation while avoiding conflicts and redundancy~\cite{sun2023mean}.

This paper presents a novel cooperative control strategy for coordinating multiple herders to safely herd a group of evaders to a desired goal region using only repulsive interactions and without inducing collisions.
Given the nonlinear, high-order, and underactuated nature of the herding system, control barrier functions(CBFs) are separately constructed to ensure both goal reaching and collision avoidance among the evaders.
A backstepping approach is then employed to recursively design virtual inputs for the control-affine structure system, culminating in a cooperative herding strategy that integrates backstepping CBFs.
The key contributions of this paper are summarized as follows:

 1) We propose a unified cooperative herding framework based on a general tool of CBFs. 
 Leveraging the set-invariance principle, the framework constructs distinct CBFs for the dual objectives of goal reaching and collision avoidance.
 These constraints are seamlessly integrated into a single quadratic program, enabling the systematic synthesis of control inputs that ensure both herding completion and safety guarantees. 

 2) We integrate the backstepping method into the CBF-based herding framework after reformulating the inherently underactuated and nonlinear herding dynamics into a control-affine structure. 
 This hierarchical decomposition enables a recursive control design that avoids taking derivatives of higher-order systems, thereby simplifying controller synthesis. 
 Moreover, the proposed strategy supports both centralized and decentralized implementations.
 
 3) Extensive simulations and hardware experiments have been conducted using mobile robots, demonstrating that herder robots can drive evader robots without collisions into the goal region with minimal movement, relying exclusively on repulsive forces. 
 These results validate the effectiveness of the proposed strategy in safe herding.

The remainder of the paper proceeds as follows. 
Section~\ref{sec2} reviews existing herding strategies and key technical approaches.
Section~\ref{sec3} presents the system model and formulates the herding problem, followed by a brief overview of the CBFs employed in the proposed framework.
In Section~\ref{sec4}, to address the nonlinear and underactuated characteristics of the system, the dynamics are reformulated into a control-affine structure, which enables the effective application of the backstepping method. 
Subsequently, backstepping CBFs are constructed for goal reaching and evader collision avoidance, respectively.
Finally, we propose a unified cooperative herding strategy with both centralized and decentralized implementations to guarantee safe and effective herding. 
Section~\ref{sec5} provides simulation and experimental results to validate the proposed algorithm and offers further analysis.
Section~\ref{sec6} concludes this work by summarizing the main contributions and outlining potential directions for future research to enhance the practicality and scalability of the proposed herding strategy.
\section{Related Work}\label{sec2}
The herding problem has emerged as a prominent research topic within the domains of multi-robot cooperative systems. 
It encompasses various challenges, such as agent dynamic interaction, spatial coordination, and goal-directed collective behavior. 
Existing literature can be broadly categorized into the following key research directions.
Early efforts focused on understanding the behavioral principles underlying herding phenomena and laid the foundation for designing herding strategies. 
For instance, the work in~\cite{hemelrijk2015underlies} developed simulation frameworks based on the trajectory data collected from real-world interactions between herders and evaders. 
Additionally, a collective perception model for neighbor selection based on a visual attention mechanism was proposed by \cite{li2024collective}, which successfully replicates both the confusion effect and the oddity effect observed in herder-evader interactions, offering insights into how perception and attention influence group behavior.
Quantitative analyses of the relationships between population ratios and velocity correlations have also been explored.
As demonstrated in~\cite{lama2024shepherding, carere2009aerial}, successful herding scenarios critically depend on the proper alignment of the herders' and evaders' population sizes and velocities, indicating that certain configurations are more favorable for effective herding.
More recent studies have shifted toward developing practical strategies for controlling multiple evaders using a single herder, as well as designing cooperative herding algorithms for multiple herder systems~\cite{hamandi2024robotic, sebastian2022adaptive, pierson2017controlling}, which is the primary emphasis of this study.

A single herder can theoretically accomplish the herding task~\cite{paranjape2018robotic, singh2024planar}, but this capability comes with a significant trade-off in terms of strategic complexity. 
Early works introduced several typical intuitive forward-steering techniques, such as positioning directly behind or moving side-to-side behind the group~\cite{lien2004shepherding}. 
While these methods are straightforward, they struggle to handle the unpredictable behaviors of evaders, especially when evaders behave evasively or escape the influence zone. 
Lyapunov-based adaptive controllers have been used to stabilize such interactions~\cite{licitra2017single}.
Additionally, leveraging the global center of mass of the evaders can lead to substantial gains in control efficiency~\cite{fujioka2016effective} when full state information is available.
Although a single herder can theoretically herd a group of evaders, the main challenge is developing intelligent and robust control strategies for effective herding. 
To simplify the problem, many existing approaches rely on favorable assumptions, such as full visibility of all evaders, inherent inter-evader repulsion, and high herder mobility.
While these assumptions facilitate algorithm design, they often limit the applicability of these methods in more practical scenarios, highlighting the need for developing cooperative multi-herder frameworks.

Compared to single-herder scenarios, coordinating multiple herders significantly increases the complexity of control, primarily due to the necessity of maintaining consistent and synchronized strategies across all herders. 
Drawing from multi-robot cooperative control research, various coordination methods have been explored in herding scenarios, including behavior-based approaches~\cite{vo2009behavior}, rule-based strategies~\cite{zhang2024distributed}, and formation-based methods~\cite{song2021herding, pierson2017controlling, kubo2022herd}. 
Beyond these general strategies, more specialized approaches have been developed. 
For instance, the outermost push approach~\cite{zhang2024distributed, zhang2024heterogeneous} achieves simultaneous herding and collecting in a fully distributed manner without the need for global information. 
However, its reliance on potential field formulations introduces design complexities and makes it difficult to guarantee global convergence. 
To handle the highly nonlinear response dynamics of evader motion, implicit control strategies grounded in numerical analysis have been proposed~\cite{sebastian2021multi, sebastian2022adaptive}. 
Instead of explicitly computing control inputs, these methods utilize local feedback to intuitively guide the system toward the desired herding behavior. 
In contrast to herding, defense~\cite{grover2022noncooperative, mohanty2022distributed} focuses on preventing agents from entering a protected region. While CBFs are also used for defense and collision avoidance, their application often introduces higher-order derivatives, substantially increasing computational complexity.


Generally, the herding challenges discussed above require control methodologies that guarantee safety, robustness, and scalability in dynamic and uncertain environments. Existing literature on addressing the challenging multiple-herder multiple-evader herding problem includes approaches such as artificial potential fields~\cite{chipade2019herding}, reinforcement learning~\cite{zhi2021learning, nguyen2019deep, zhao2025mathematical}, cascaded-PID~\cite{pierson2017controlling}, and CBFs~\cite{hamandi2024robotic}. Among these, CBFs stand out as a promising solution due to their ability to provide strict safety guarantees and effectively utilize system dynamics, motivating their use in this paper. However, existing solutions struggle to handle the high-order dynamics inherent in the herding problem. Methods such as cascaded-PID and exponential CBF are considered in~\cite{pierson2017controlling,mohanty2022distributed}. In contrast, the backstepping CBF~\cite{taylor2022safe}, which provides strict safety guarantees compared to cascaded-PID and greater adaptability and robustness than the exponential CBF, has not been explored in the existing literature. Backstepping CBF is particularly well-suited for nonlinear, high-dimensional, or multi-agent systems. While the simpler exponential CBF is effective for linear systems with exponential decay, the recursive structure of BCBF allows it to address complex dynamics, uncertainties, and non-convex safety regions, making it more suitable for real-world applications where safety is crucial.

Motivated by the above analysis of existing work, this study focuses on designing a cooperative herding strategy that enables multiple herders to drive a group of evaders into a desired goal region using only repulsive interactions and without inducing collisions. To simultaneously ensure herding performance and safety, we construct separate CBFs for goal reaching and collision avoidance, which are integrated into a unified quadratic programming framework. 
This formulation is readily extendable to incorporate additional task constraints, thereby enhancing the generality and applicability of the proposed framework.
Importantly, to avoid the computational burden associated with higher-order derivatives in the standard CBF approach, we introduce a backstepping-based design after reformulating the system dynamics into a control-affine form. This leads to a cooperative herding strategy using backstepping CBFs, which offers strict safety guarantees, enhanced adaptability, and greater robustness compared to cascaded-PID and exponential CBFs. Backstepping CBFs are especially effective for nonlinear, high-dimensional, and multi-agent systems, making them ideal for real-world applications.

\section{Problem Formulation and Preliminaries}\label{sec3}
Consider herding a group of $m$ evader robots using $n$ herder robots. 
As shown in Fig.~\ref{Herding_Senario}, the position of the $i$-th evader is $\bf{x}_{\rm{E}_i} \in \mathbb{R}^2,i\in \mathcal{E}:=\{1,...,m\}$, with the collective positions of the evaders denoted as $\bf{x}_\rm{E}:=\left[\bf{x}_{\rm{E}_1}^{\top}, x_{\rm{E}_2}^{\top}, \ldots, \bf{x}_{\rm{E}_m}^{\top}\right]^{\top}$.
Similarly, the position of the $k$-th herder is $\bf{x}_{\rm{H}_k} \in \mathbb{R}^2, k\in\mathcal{K}:=\{1,...,n\}$, with the collective positions of the herders denoted as $ \bf{x}_\rm{H}:= \left[\bf{x}_{\rm{H}_1}^{\top}, \bf{x}_{\rm{H}_2}^{\top}, \ldots, \bf{x}_{\rm{H}_n}^{\top}\right]^{\top}$.  

\begin{figure}[tp]
    \centering
    \makebox[0pt]{%
    \includegraphics[width=0.4\textwidth]{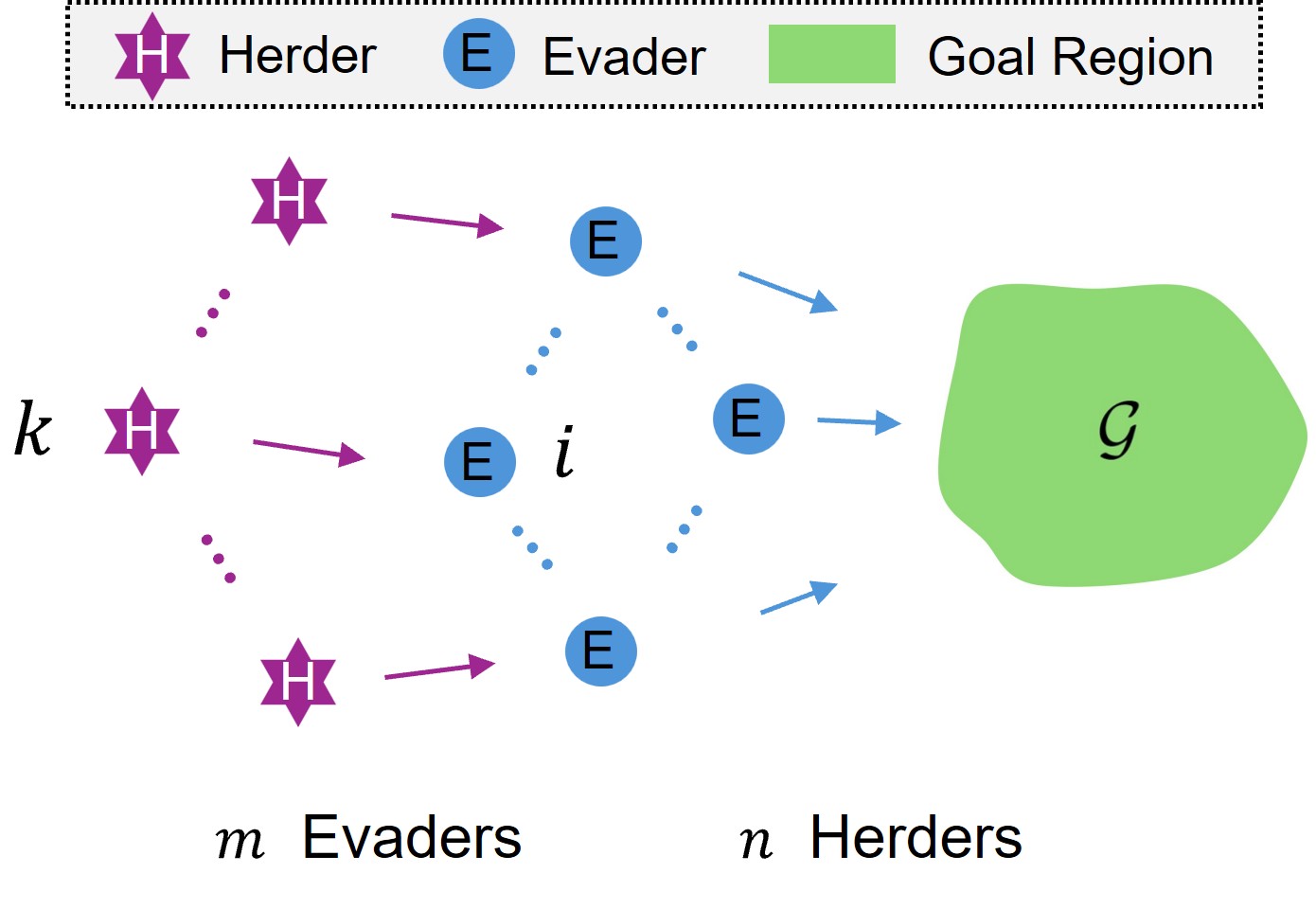}}
    \caption{The settings of the herding problem.}
    \label{Herding_Senario}
\end{figure}

\subsection{System Model}
For each evader $i \in \mathcal{E}$, we assume its movement is governed by the following dynamics:
\begin{equation}\label{Evader_Dynamics}
\begin{aligned}
\dot{\bf{x}}_{\rm{E}_i} =f_{\rm{E}_i}(\bf{x}_{\rm{E}},\bf{x}_{\rm{H}}),
\end{aligned}
\end{equation}
where $f_{\rm{E}_i}(\cdot)$ is a continuously differentiable nonlinear function. 
As discussed in the literature~\cite{sebastian2022adaptive}, there are two commonly used models for~\eqref{Evader_Dynamics}: the inverse model~\cite{pierson2017controlling} and the exponential model~\cite{licitra2017single}. 
Compared to the exponential model, the inverse model provides smoother and more predictable control due to its gradual force decay, which allows herders to influence evaders effectively even when they are distant. In contrast, the exponential model’s sharp force decay leads to extreme short-range repulsion, which can destabilize simulations and hinder scalability in multi-agent coordination. Furthermore, the inverse model better reflects natural interaction laws and is more compatible with physical robotic control interfaces, making it a more intuitive choice for herding dynamics.

In this study, the inverse model is adopted, where evader velocity is directly determined by the gradient of an artificial potential field generated by the herders, in which the function $f_{\rm{E}_i}(\bf{x}_{\rm{E}}, \bf{x}_{\rm{H}})$ in \eqref{Evader_Dynamics} is given by
\begin{equation}\label{Evader_Dynamics_Specific}
\begin{aligned}
f_{\rm{E}_i}(\bf{x}_{\rm{E}},\bf{x}_{\rm{H}}) =\kappa_{\rm{H}}\sum\limits_{k\in\mathcal{K}}\frac{\bf{x}_{\rm{E}_{i}}-\bf{x}_{\rm{H}_{k}}}{\|\bf{x}_{\rm{E}_{i}}-\bf{x}_{\rm{H}_{k}}\|^{3}},
\end{aligned}
\end{equation}
where $\kappa_{\rm{H}}\in\mathbb{R}_{>0}$ is the proportional gain associated with the repulsive force in the dynamics, indicating that the evaders consistently move away from the herders. 
This approach is widely used in robotics and biological herding models~\cite{pierson2017controlling}, and offers a balance between ease of analysis and realistic modeling. 
Next, the dynamics of the $k$-th herder, where $k \in \mathcal{K}$, are modeled as a single integrator. That is, the herders are velocity-controlled and governed by the following equation: 
\begin{equation}\label{herder_Dynamics}
\begin{aligned}
\dot{\bf{x}}_{\rm{H}_k}&=\bf{u}_{\rm{H}_k},
\end{aligned}
\end{equation}
where $\bf{u}_{\rm{H}_k}$ is the control input of the $k$-th herder. 

An additional point to note is that full-state information is obtained via motion capture, and evader behavior is modeled through known repulsive interactions under an inverse dynamics framework. These assumptions help isolate and validate the core functionality of the proposed backstepping CBF controller. However, in practical scenarios, issues such as sensor noise, partial observability, and model uncertainties may degrade performance and safety. These challenges can be addressed by incorporating state estimation techniques~\cite{bell2016adaptive}, robust/adaptive control methods~\cite{licitra2017single,parikh2019integral}, or learning-based approaches~\cite{licitra2019single}. 
Hence, the following assumption is introduced.
\vspace{0.5em}
\begin{Asum}\label{Knowledge_Dynamics}
The herders are assumed to have full knowledge of the evader dynamics as specified in~\eqref{Evader_Dynamics} and can accurately measure the positions of the evaders, which is a common assumption in the herding literature~\cite{pierson2017controlling, sebastian2022adaptive}. Moreover, they possess precise information about the goal region.
\end{Asum}
\vspace{0.5em}
Given the dynamics of the evaders in~\eqref{Evader_Dynamics} and the herders in~\eqref{herder_Dynamics}, they are assumed to be mass points since this simplification facilitates implementation in both simulations and physical robotic systems. 
Accordingly, the joint system dynamics can be expressed as follows:
\begin{equation}\label{Joint_Dynamics}
    \dot{\bf{x}}=f(\bf{x})+g(\bf{x})\cdot\bf{u}_{\rm{H}},
\end{equation}
where $\bf{x}=[\bf{x}_{\rm{E}}^{\top}, \bf{x}_{\rm{H}}^{\top}]^{\top}$ denotes the overall system state, and $f(\bf{x})=[f_{\rm{E}}(\bf{x})^{\top}, \bf{0}^{\top}]^{\top}$ is the drift vector field with $f_{\rm{E}}(\bf{x})=[f_{\rm{E}_{1}}(\bf{x}),f_{\rm{E}_{2}}(\bf{x}),\cdots,f_{\rm{E}_{m}}(\bf{x})]^{\top}$ representing the evader dynamics. 
The matrix $g(\bf{x})=[\bf{0}^{\top}, \bf{I}^{\top}]^{\top}$ captures the control directions of the system and $\bf{u}_{\rm{H}}=[\bf{u}_{\rm{H}_{1}}^{\top},\bf{u}_{\rm{H}_{2}}^{\top},\cdots,\bf{u}_{\rm{H}_{n}}^{\top}]^{\top}$ denotes the collective control inputs of the herders.

\vspace{0.5em}
\begin{Rmk}\label{remark_underactuated}
When modeling the evaders and herders as heterogeneous agents governed by~\eqref{Joint_Dynamics}, it is important to note that control inputs are applied exclusively to the herders, indirectly influencing the evaders. Consequently, the overall system in~\eqref{Joint_Dynamics} is underactuated since the number of control inputs ($2n$) is insufficient to directly control the full state of the system ($2m + 2n$).
\end{Rmk}

\subsection{Problem Formulation}
This work aims to develop a cooperative control strategy for a team of $n$ herders to collectively drive a group of $m$ evaders into a designated goal region. 
In addition to successful herding, collision-free motion must be ensured, preventing collisions both within and between the groups of evaders and herders.
To further clarify the objective, we provide the following details.

Define the goal region as $\mathcal{G} \subset \mathbb{R}^2$ and assume that it is a circular area with center $\bf{x}_{\text{goal}}$ and radius $R_{\text{goal}}$, given by
\begin{align}
\mathcal{G}:=\left\{\bm{\xi} \in \mathbb{R}^2 \mid\left\|\bm{\xi} -\bf{x}_{\text{goal}}\right\|^{2} \leq R_{\text{goal}}^{2}\right\}.
\end{align}
It is assumed that the goal region $\mathcal{G}$ is sufficiently large to accommodate all $m$ evaders.

For the $i$-th evader, $i \in \mathcal{E}$, two types of collisions must be avoided:
i) collisions with the herders. 
ii) collisions with other evaders during the herding process. 
While the first requirement is intrinsically satisfied by the system dynamics as given in~\eqref{Evader_Dynamics} and~\eqref{Evader_Dynamics_Specific}, the second is enforced by imposing the following condition
\begin{equation}
    \mathcal{B}_{ji}=\{\bm{\xi}\in\mathbb{R}^{2}|\|\bf{x}_{\rm{E}_{j}}-\bm{\xi}\|^2\geq R_{\text{avoid}}^{2}\},
\end{equation}
where $R_{\text{avoid}}$ denotes the predefined safe distance between the $j$-th and $i$-th evaders. This can also be interpreted as a safety margin that accounts for the physical size of robots in real-world scenarios. Consequently, we define the set $\mathcal{B}$ to be the set that guarantees collision avoidance among $m$ evaders at all times, given by
\begin{equation}
    \mathcal{B}=\bigcap\limits_{i=1}^{m}\bigcap\limits_{j=i+1}^{m}\mathcal{B}_{ij},
\end{equation}
where $i, j \in\mathcal{E}$, and $i\neq j$.
\vspace{0.5em}
\begin{Rmk}
 In practical scenarios, collision avoidance among herders is also of great importance. 
 However, this work primarily focuses on how the herders' control inputs indirectly influence the behavior of the evaders, which is addressed within the framework of backstepping CBFs. 
 Although collisions between herders can be avoided by incorporating additional CBF constraints based on the herders' dynamics, as given in~\eqref{herder_Dynamics}, such considerations fall outside the scope of the core objectives of this paper. 
 Therefore, the strategy for preventing collisions between herders is not explicitly discussed.
\end{Rmk}

\vspace{0.5em}

\begin{Prob}\label{Herding_Problem}
Consider a herding task $\mathcal{T}$ involving $m$ evaders and $n$ herders governed by~\eqref{Evader_Dynamics} and~\eqref{herder_Dynamics}, respectively. The objective is to design control inputs $\mathbf{u}_{\rm{H}}$ for the herders such that:
1) all evaders reach and remain within a designated goal region $\mathcal{G}$, i.e., $\mathbf{x}_{\rm{E}_i} \in \mathcal{G},\ \forall i \in \mathcal{E}$;
2) inter-evader collision avoidance is guaranteed, i.e., $\bf{x}_{\rm{E}}\in \mathcal{B}$.
\end{Prob}

\subsection{Control Barrier Functions} 
Specifically, consider a closed set $\mathcal{C}\subset\mathbb{R}^{n}$ defined as the $0$-superlevel set of a continuously differentiable function $h:\mathbb{R}^{n}\rightarrow\mathbb{R}$, given by
\begin{equation}\label{Invariant_Set}
		\begin{aligned}
			\mathcal{C} & \triangleq\left\{\mathbf{x}\in \mathbb{R}^{n}: h(\mathbf{x}) \geq 0\right\}, \\
			\partial \mathcal{C} & \triangleq\left\{\mathbf{x}\in\mathbb{R}^{n}: h(\mathbf{x})=0\right\}, \\
			\operatorname{Int}(\mathcal{C}) & \triangleq\left\{\mathbf{x}\in\mathbb{R}^{n}: h(\mathbf{x})>0\right\},
		\end{aligned}
\end{equation}
where $\operatorname{Int}(\mathcal{C})$ and $\partial \mathcal{C}$ denote the interior and boundary of the set $\mathcal{C}$, respectively. 
\vspace{0.5em}
\begin{Def}
(Extended class-$\mathcal{K}$ function, $\mathcal{K}_{e}$): A continuous function $\alpha:(-b,a)\rightarrow(-\infty,\infty)$, with $a,b>0$, is an extended class $\mathcal{K}$ function ($\alpha\in\mathcal{K}_{e}$) if $\alpha(0)=0$ and $\alpha$ is strictly increasing.
\end{Def}
\vspace{0.5em}
\begin{Def}
(Forward Invariance and Safety ~\cite{ames2016control}): A set $\mathcal{C}\subset\mathbb{R}^{n}$ defined in~\eqref{Invariant_Set} is~\textit{forward invariant} w.r.t.~\eqref{Joint_Dynamics} if for every $\mathbf{x}_{0}\in\mathcal{C}$, the solution $\mathbf{x}(t)$ to~\eqref{Joint_Dynamics} satisfies $\mathbf{x}(t)\in\mathcal{C}$ for all $t\in\bf{T}(\mathbf{x}_{0})$, where $\bf{T}(\mathbf{x}_{0})$ defines a time interval $[0, t_{\max})$,  $t_{\max}\in\mathbb{R}_{>0}$. The system~\eqref{Joint_Dynamics} is~\text{safe} on the set $\mathcal{C}$ if the set $\mathcal{C}$ is forward invariant.
\end{Def}
\vspace{0.5em}
With the aid of the above definitions, the formal definition of CBFs is presented as follows.

\vspace{0.5em}
\begin{Def}\label{CBF_Definition}
		 Let $\mathcal{C}\subset\mathbb{R}^{n}$ be the $0$-superlevel set of a smooth function $h:\mathbb{R}^{n}\rightarrow\mathbb{R}$ defined by \eqref{Invariant_Set}. Then, $h$ is a CBF for system~\eqref{Joint_Dynamics} if there exists a smooth function $\beta\in\mathcal{K}_{e}$ such that, for all $\mathbf{x}\in\mathcal{C}$, there exists a control input $\mathbf{u}\in\mathbb{R}^{m}$ satisfying
		\begin{equation}\label{CBF_Condition}
    \begin{split}
        &L_{\mathbf{f}} h(\mathbf{x})+L_{\mathbf{g}} h(\mathbf{x})\mathbf{u}\geq -\beta({h(\mathbf{x})}),
    \end{split}
\end{equation}
where $\beta(\cdot)$ is an extended class-$\mathcal{K}$ function.
\end{Def}

\section{Controller Design}\label{sec4}
In this section, we propose a cooperative herding strategy based on backstepping CBFs, enabling multiple herders to collaboratively steer a group of evaders safely toward the target region solely through repulsive interactions.

\subsection{Reformulating the Dynamical Model}
To facilitate the design of backstepping controllers for the dynamical model~\eqref{Evader_Dynamics} and~\eqref{herder_Dynamics}, the state $\mathbf{x}_{\mathrm{H}}$ in~\eqref{Evader_Dynamics} will first be treated as a virtual control input. However, this treatment makes the dynamical model~\eqref{Evader_Dynamics} not control-affine. Consequently, the results presented in~\cite{taylor2022safe}, which require a control-affine structure for the dynamical models in~\eqref{Evader_Dynamics} and~\eqref{herder_Dynamics}, cannot be applied directly.

To address this issue, we reformulate the system dynamics~\eqref{Evader_Dynamics} and~\eqref{herder_Dynamics}. Specifically, define an intermediate variable $  \mathbf{v}_{\mathrm{E}_{i}} := f_{\rm{E}_i}(\mathbf{x}_{\rm{E}}, \mathbf{x}_{\rm{H}})$, where the function $f_{\rm{E}_i}(\mathbf{x}_{\rm{E}}, \mathbf{x}_{\rm{H}})$ is provided in~\eqref{Evader_Dynamics_Specific}. 
With this definition, the evader dynamics~\eqref{Evader_Dynamics} can be equivalently expressed as
\begin{equation}\label{Evader_dynamics_Reformulate}
    \dot{\bf{x}}_{\rm{E}_i} = \mathbf{v}_{\mathrm{E}_{i}}.
\end{equation}
Next, based on the expressions in~\eqref{Evader_Dynamics_Specific} and~\eqref{herder_Dynamics}, the time derivative of $\mathbf{v}_{\mathrm{E}_{i}}$ can be derived as
\begin{equation}\label{Herder_dynamics_Reformulate}
\begin{split}
   \dot{\mathbf{v}}_{\mathrm{E}_{i}} = \frac{\partial f_{{\rm{E}}_{i}}}{\partial \bf{x}_{{\rm{E}}_{i}}} \mathbf{v}_{\mathrm{E}_{i}} +\frac{\partial f_{{\rm{E}}_{i}}}{\partial \bf{x}_{{\rm{H}}_{k}}} \bf{u}_{{\rm{H}}_{k}},
\end{split}
\end{equation}
where
\begin{equation*}
\begin{aligned}
\frac{\partial f_{{\rm{E}}_{i}}}{\partial \bf{x}_{{\rm{E}}_{i}}}&=\kappa_{\rm{H}} \sum_{k \in \mathcal{K}}\left(\frac{\mathbf{I}_{2\times 2}}{\left\|\bf{d}_{i,k} \right\|^3}-3 \frac{\bf{d}_{i,k} \bf{d}_{i,k} ^{\top}}{\left\|\bf{d}_{i,k} \right\|^5}\right),\\
\frac{\partial f_{{\rm{E}}_{i}}}{\partial \bf{x}_{{\rm{H}}_{k}}}&=-\frac{\partial f_{{\rm{E}}_{i}}}{\partial \bf{x}_{{\rm{E}}_{i}}},\quad \bf{d}_{i,k} = \bf{x}_{{\rm{E}}_{i}} - \bf{x}_{{\rm{H}}_{k}},
\end{aligned}
\end{equation*}
and $\mathbf{I}_{2\times 2}\in  \mathbb{R}^2$ is the identity matrix. As shown in equations~\eqref{Evader_dynamics_Reformulate} and~\eqref{Herder_dynamics_Reformulate}, the control inputs for the reformulated systems are $\bf{v}_{{\rm{E}_i}}$ and $\bf{u}_{\rm{H}_{k}}$, respectively. Since both dynamical models now exhibit a control-affine structure, the backstepping framework proposed in~\cite{taylor2022safe} becomes applicable for addressing the herding task.

\subsection{Backstepping Control Barrier Functions}
\subsubsection{Goal Reaching}
For Problem~\ref{Herding_Problem}, one might consider employing backstepping control Lyapunov functions to address the goal-reaching objective. 
However, since the aim is not to converge to a specific point but rather to ensure that all evaders are driven into and remain within the goal region $\mathcal{G}$, it is more appropriate to utilize a backstepping CBF. 
To this end, a candidate CBF is constructed as follows:
\begin{equation}\label{CLF_Candidate}
    h_1(\bf{x}_{\rm{E}_{i}}) =R_{\text{goal}}^{2}- \|\bf{x}_{\rm{E}_{i}} - \bf{x}_{\text{goal}}\|^{2} , \quad\forall i\in\mathcal{E}.
\end{equation}

When \(h_1(\mathbf{x}_{\rm{E}_i}) < 0\), indicating that the evader is outside the goal region, the CBF condition defined in~\eqref{CBF_Definition} guarantees the convergence of the system state \(\mathbf{x}_{\rm{E}_i}\) to the goal region \(\mathcal{G}\) (see~\cite{xu2015robustness}). 
Specifically, by defining the CLF as \(V_1(\mathbf{x}_{\rm{E}_i}) = -h_1(\mathbf{x}_{\rm{E}_i})\), the CBF condition will reduce to a CLF condition that ensures the convergence of \(\mathbf{x}_{\rm{E}_i}\) to the goal region. 
CBFs provide a more general and theoretically rigorous framework for simultaneously ensuring convergence (when outside the goal set) and invariance (once within it), which is particularly important for the safety-critical nature of the herding task considered in this work.

Once the evader enters the goal region, the same CBF condition ensures the set invariance property, thereby
addressing the goal-reaching task in Problem~\ref{Herding_Problem}.
Therefore, based on the definition of the CBF in Definition~\ref{CBF_Definition}, the following CBF condition can be derived for the dynamical system~\eqref{Evader_dynamics_Reformulate}: 
\begin{equation}\label{CLF_First_Order}
    -2\left(\bf{x}_{\rm{E}_{i}} - \bf{x}_{\text{goal}}\right)^{\top}\cdot\mathbf{v}_{\mathrm{E}_{i}}+\gamma_h h_1(\bf{x}_{\rm{E}_{i}}) \geq 0,
\end{equation}
where $\gamma_{h}>0$ is a constant.

To meet the backstepping control design requirements, the virtual controller $\mathbf{v}_{\mathrm{E}_{i}}=\bf{r}_h(\bf{x}_{\rm{E}_{i}})$ must be designed to satisfy the CBF condition specified in~\eqref{CLF_First_Order}. 
Furthermore, it is essential to ensure that $\mathbf{r}_h(\mathbf{x}_{\rm{E}_{i}})$ is smooth, as discussed in~\cite{taylor2022safe}. 
This can typically be achieved using Sontag's universal formula~\cite{sontag1989universal} or the smooth universal formula~\cite{li2024unifying}. 
However, due to the simplicity of the single integrator model~\eqref{Evader_dynamics_Reformulate}, it can be directly verified that the following control law satisfies the CBF condition in~\eqref{CLF_First_Order}: 
\begin{equation}\label{first_virtual_reaching}
    \bf{r}_h(\bf{x}_{\rm{E}_{i}})=-\gamma_{h}\left(\bf{x}_{\rm{E}_{i}}-\bf{x}_{\text{goal}}\right).
\end{equation}
Note that the control law \(\mathbf{r}_h(\mathbf{x}_{\rm{E}_i})\) can easily be verified to satisfy~\eqref{CLF_First_Order} by substituting~\eqref{first_virtual_reaching} into the condition.

Next, following the approach in~\cite{taylor2022safe}, we construct the following backstepping CBF for the dynamical model~\eqref{Herder_dynamics_Reformulate}:
\begin{align}
    h_1(\mathbf{x}_{\rm{E}_{i}}, \mathbf{v}_{\rm{E}_{i}}) &= R_{\text{goal}}^2 - \|\mathbf{x}_{\rm{E}_{i}} - \mathbf{x}_{\text{goal}}\|^2  \notag \\
    &\quad - \frac{1}{2\mu} \|\mathbf{v}_{\rm{E}_{i}} - \mathbf{r}_{h}(\mathbf{x}_{\rm{E}_{i}})\|^2, \label{sencond_candidate_Lyapunov}
\end{align}
where $\mu\in\mathbb{R}_{>0}$. 

Subsequently, to construct a CBF that satisfies the goal reaching condition with $h_1(\bf{x}_{\rm{E}_{i}},\bf{v}_{\rm{E}_{i}})$ given in~\eqref{sencond_candidate_Lyapunov}, we adopt the following structure based on Definition~\ref{CBF_Definition}:

\begin{equation}\label{Final_Lyapunov_Condition}
\begin{aligned}
        &-2(\bf{x}_{\rm{E}_{i}} - \bf{x}_{\text{goal}})^{\top} \bf{v}_{\rm{E}_{i}} - \frac{1}{\mu}(\bf{v}_{\rm{E}_{i}}-\bf{r}_{h}(\bf{x}_{\rm{E}_{i}}))^{\top}  \\
        &\cdot\Big(\frac{\partial f_{{\rm{E}}_{i}}}{\partial \bf{x}_{{\rm{E}}_{i}}}\cdot \mathbf{v}_{\mathrm{E}_{i}}  
      +\frac{\partial f_{{\rm{E}}_{i}}}{\partial \bf{x}_{{\rm{H}}_{k}}}\cdot \bf{u}_{{\rm{H}}_{k}}+\gamma_h \mathbf{v}_{\mathrm{E}_{i}} \Big)\\
      &+\gamma_h h_1(\bf{x}_{\rm{E}_{i}},\bf{v}_{\rm{E}_{i}})\geq 0.
\end{aligned}
\end{equation}

Next, based on~\eqref{Final_Lyapunov_Condition}, Sontag's universal formula is employed to design the stabilizing control law. 
To achieve this, we define the following variables:
\begin{equation}\label{Sontag_variable}
\begin{split}
    &a(\bf{x}_{\rm{E}_{i}}, \bf{v}_{\rm{E}_{i}}):=-2(\bf{x}_{\rm{E}_{i}} - \bf{x}_{\text{goal}})^{\top}\bf{v}_{\rm{E}_{i}}+\gamma_h h_1(\bf{x}_{\rm{E}_{i}},\bf{v}_{\rm{E}_{i}})\\
    & 
     -\frac{1}{\mu}(\bf{v}_{\rm{E}_{i}}-\bf{r}_{h}(\bf{x}_{\rm{E}_{i}}))^{\top}\Big(\frac{\partial f_{{\rm{E}}_{i}}}{\partial \bf{x}_{{\rm{E}}_{i}}}\cdot \mathbf{v}_{\mathrm{E}_{i}}+\gamma_h\mathbf{v}_{\mathrm{E}_{i}}\Big),\\
     &\mathbf{b}(\bf{x}_{\rm{E}_{i}},\bf{v}_{\rm{E}_{i}}):=-\frac{1}{\mu}(\bf{v}_{\rm{E}_{i}}-\bf{r}_{h}(\bf{x}_{\rm{E}_{i}}))^{\top}\cdot \frac{\partial f_{{\rm{E}}_{i}}}{\partial \bf{x}_{{\rm{H}}_{k}}}.
\end{split}
\end{equation}
Consequently, condition~\eqref{Final_Lyapunov_Condition} can be equivalently rewritten as follows:
\begin{equation}\label{CLF_Herding_Reformulate}
    a(\bf{x}_{\rm{E}_{i}}, \bf{v}_{\rm{E}_{i}})+\mathbf{b}(\bf{x}_{\rm{E}_{i}},\bf{v}_{\rm{E}_{i}})\cdot\bf{u}_{{\rm{H}}_{k}}\geq 0.
\end{equation}

\setcounter{figure}{1}
\begin{figure*}[!http] 
    \centering
    \begin{subfigure}[b]{1\textwidth}
        \includegraphics[width=\textwidth]{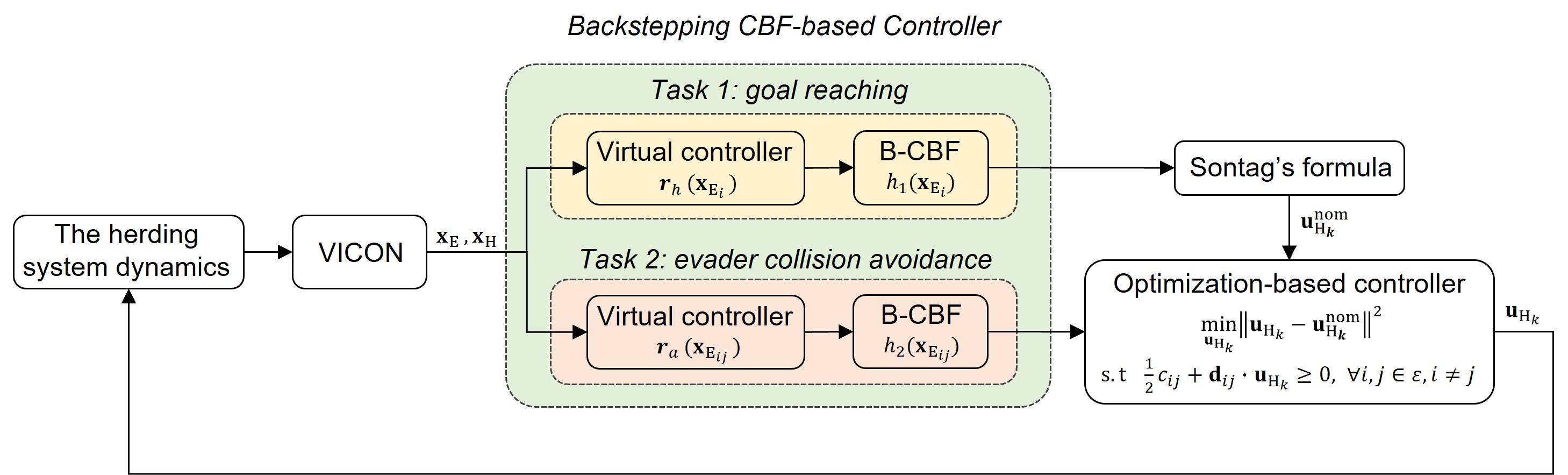}
    \end{subfigure}
    \hfill
    \caption{Architecture of the cooperative safe herding strategy.}
    \label{diagram}
\end{figure*}


\subsubsection{Evader Collision Avoidance}
The next requirement is to ensure that evaders avoid collisions with each other during the herding process. To address this, we define the following candidate CBF:
\begin{equation}\label{CBF_Candidate}
    h_2(\bf{x}_{\rm{E}_{ij}}) = \|\bf{x}_{\rm{E}_{ij}}\|^{2} - R_{\text{avoid}}^{2}, \forall i, j \in\mathcal{E}, i\neq j ,
\end{equation}
where $\bf{x}_{\rm{E}_{ij}}=\bf{x}_{\rm{E}_{i}} - \bf{x}_{\rm{E}_{j}}$, and $R_{\text{avoid}}$ represents the safety distance between evaders. 
Similarly to the design of the goal-reaching controller, we first apply the CBF condition given in~\eqref{CBF_Condition} to design a virtual controller $\bf{v}_{\rm{E}_{ij}} = \mathbf{r}_{a}(\bf{x}_{\rm{E}_{ij}})$ for the dynamical model~\eqref{Evader_dynamics_Reformulate}. Then, using the candidate CBF presented in~\eqref{CBF_Candidate}, the following condition is obtained:
\begin{equation}\label{CBF_Synthesis}
    2\bf{x}_{\rm{E}_{ij}}^{\top} \cdot\bf{v}_{\rm{E}_{ij}} \geq  -\gamma _ah_2(\bf{x}_{\rm{E}_{ij}}),
\end{equation}
where $\bf{v}_{\rm{E}_{ij}}=\bf{v}_{\rm{E}_{i}}-\bf{v}_{\rm{E}_{j}}$, $\gamma_a\in\mathbb{R}_{>0}$ and $\gamma _ah_2(\bf{x}_{\rm{E}_{ij}})$ is served as a specific $\mathcal{K}_{e}$ function as given in~\eqref{CBF_Condition}. According to the demonstration in~\cite{taylor2022safe}, a smooth controller $\bf{v}_{\rm{E}_{ij}}=\mathbf{r}_{a}(\bf{x}_{\rm{E}_{ij}})$ is designed based on~\eqref{CBF_Synthesis}. 
To achieve this, $\mathbf{r}_{a}(\bf{x}_{\rm{E}_{ij}})$ is defined as follows:
\begin{equation}\label{Safe_Virtual}
    \mathbf{r}_{a}(\bf{x}_{\rm{E}_{ij}})=\gamma_a \cdot\bf{x}_{\rm{E}_{ij}}.
\end{equation}
The controller provided in~\eqref{Safe_Virtual} can be easily verified to satisfy the condition in~\eqref{CBF_Synthesis}. Next, for the dynamical model~\eqref{Herder_dynamics_Reformulate}, we introduce the following backstepping CBF:
\begin{equation}\label{sencond_candidate_Barrier}
\begin{split}
    h_2(\bf{x}_{\rm{E}_{ij}},\bf{v}_{\rm{E}_{ij}})=&\|\bf{x}_{\rm{E}_{ij}}\|^{2}- R_{\text{avoid}}^{2} \notag\\
    &-\frac{1}{2\mu}\|\bf{v}_{\rm{E}_{ij}}-\mathbf{r}_{a}(\bf{x}_{\rm{E}_{ij}})\|^{2}.
\end{split}
\end{equation}
Subsequently, to construct a CBF that satisfies the evader collision avoidance condition with $h_2(\bf{x}_{\rm{E}_{ij}},\bf{v}_{\rm{E}_{ij}})$ given in~\eqref{sencond_candidate_Barrier}, we adopt the following structure based on Definition~\ref{CBF_Definition}:

\begin{equation}\label{CBF_Herding}
\begin{split}
    &2\bf{x}_{\rm{E}_{ij}}^{\top}\bf{v}_{\rm{E}_{ij}} -\frac{1}{\mu}(\bf{v}_{\rm{E}_{ij}}-\mathbf{r}_{a}(\bf{x}_{\rm{E}_{ij}}))^{\top} \\
    &\cdot \bigg( \left( \frac{\partial f_{{\rm{E}}}}{\partial \mathbf{x}_{{\rm{E}}}} \right)_{ij}\cdot \mathbf{v}_{\mathrm{E}_{ij}}  +\left( \frac{\partial f_{{\rm{E}}}}{\partial \mathbf{x}_{{\rm{H}}}} \right)_{kq}\cdot \bf{u}_{{\rm{H}}_{kq}}-\gamma_a\mathbf{v}_{\mathrm{E}_{ij}} \bigg)  \\
     &+\gamma_a h_2(\bf{x}_{\rm{E}_{ij}},\bf{v}_{\rm{E}_{ij}})\geq 0,
\end{split}
\end{equation}
 where $ \left( \frac{\partial f_{{\rm{E}}}}{\partial \mathbf{x}_{{\rm{E}}}} \right)_{ij}=\frac{\partial f_{{\rm{E}}_{i}}}{\partial \bf{x}_{{\rm{E}}_{i}}}-\frac{\partial f_{{\rm{E}}_{j}}}{\partial \bf{x}_{{\rm{E}}_{j}}}$, $\left( \frac{\partial f_{{\rm{E}}}}{\partial \mathbf{x}_{{\rm{H}}}} \right)_{kq}=\frac{\partial f_{{\rm{E}}_{i}}}{\partial \bf{x}_{{\rm{H}}_{k}}}-\frac{\partial f_{{\rm{E}}_{j}}}{\partial \bf{x}_{{\rm{H}}_{q}}}$, and $\bf{u}_{{\rm{H}}_{kq}}=\bf{u}_{{\rm{H}}_{k}}-\bf{u}_{{\rm{H}}_{q}}$. Afterward, the same process as in~\eqref{Sontag_variable} is followed, and we define:
\begin{equation}\label{Sontag_CBF_Variable}
    \begin{aligned}
        &c(\mathbf{x}_{\rm{E}_{ij}},\mathbf{v}_{\rm{E}_{ij}}) 
        := 2\mathbf{x}_{\rm{E}_{ij}}^{\top} \mathbf{v}_{\rm{E}_{ij}} 
        + \gamma_a h_2(\mathbf{x}_{\rm{E}_{ij}}, \mathbf{v}_{\rm{E}_{ij}}) \\
        &- \frac{1}{\mu} (\mathbf{v}_{\rm{E}_{ij}} - \mathbf{r}_{a}(\mathbf{x}_{\rm{E}_{ij}}))^{\top} 
        \bigg( \left( \frac{\partial f_{\rm{E}}}{\partial \mathbf{x}_{\rm{E}}} \right)_{ij} 
        \mathbf{v}_{\rm{E}_{ij}} - \gamma_a \mathbf{v}_{\rm{E}_{ij}} \bigg), \\[6pt]
        &\mathbf{d}(\mathbf{x}_{\rm{E}_{ij}},\mathbf{v}_{\rm{E}_{ij}}) 
        := -\frac{1}{\mu} (\mathbf{v}_{\rm{E}_{ij}} - \mathbf{r}_{a}(\mathbf{x}_{\rm{E}_{ij}}))^{\top} 
        \left( \frac{\partial f_{\rm{E}}}{\partial \mathbf{x}_{\rm{H}}} \right)_{kq}.
    \end{aligned}
\end{equation}
where $k,q\in\mathcal{K}, k\neq q$.
Therefore, the condition~\eqref{CBF_Herding} can be equivalently rewritten in the following form: 
\begin{equation}\label{CBF_Herding_Reformulate}
    c(\bf{x}_{\rm{E}_{ij}},\bf{v}_{\rm{E}_{ij}})+\mathbf{d}(\bf{x}_{\rm{E}_{ij}},\bf{v}_{\rm{E}_{ij}})\cdot\bf{u}_{{\rm{H}}_{kq}}\geq 0.
\end{equation}
However, it is important to note that condition~\eqref{CBF_Herding_Reformulate} is only applicable for centralized implementation. 
In contrast, for a distributed implementation, it is necessary to design controllers for herders $k$ and $q$ based on the condition~\eqref{CBF_Herding_Reformulate}. 
Specifically, following the approach presented in~\cite{wang2017safety}, condition~\eqref{CBF_Herding_Reformulate} can be decomposed into the following separate conditions for herders $k$ and $q$, respectively:
\begin{subequations}\label{Splitted_CBF_Condition}
    \begin{align}
        &\frac{1}{2}c(\mathbf{x}_{\rm{E}_{ij}},\mathbf{v}_{\rm{E}_{ij}}) + \mathbf{d}(\mathbf{x}_{\rm{E}_{ij}},\mathbf{v}_{\rm{E}_{ij}})\cdot\mathbf{u}_{{\rm{H}}_{k}} \geq 0, \label{Splitted_CBF_a} \\
        &\frac{1}{2}c(\mathbf{x}_{\rm{E}_{ij}},\mathbf{v}_{\rm{E}_{ij}}) - \mathbf{d}(\mathbf{x}_{\rm{E}_{ij}},\mathbf{v}_{\rm{E}_{ij}})\cdot\mathbf{u}_{{\rm{H}}_{q}} \geq 0. \label{Splitted_CBF_b}
    \end{align}
\end{subequations}
Generally, condition~\eqref{Splitted_CBF_Condition} serves as a sufficient condition for~\eqref{CBF_Herding_Reformulate}, where the weighting coefficient $\frac{1}{2}$ in the original formulation can be generalized. 
Specifically, the two inequalities in~\eqref{Splitted_CBF_Condition} can be weighted by $ \alpha $ and $ \beta $, respectively, where $ 0<\alpha<1$, $0<\beta<1$, and $\alpha+\beta=1$. 
This generalized weighting formulation allows for flexible distribution of responsibility between the two constraints while preserving the overall sufficiency of the condition.

\subsection{Decentralized Controllers}
To address Problem~\ref{Herding_Problem}, controllers need to be designed for each herder to ensure that all evaders reach the goal region $\mathcal{G}$ while avoiding collisions. 
This requires satisfying the CBF conditions ~\eqref{CLF_Herding_Reformulate} and~\eqref{Splitted_CBF_Condition}. 
To achieve this, we adopt the safety filter approach proposed in~\cite{ames2019control}, wherein a nominal controller is first constructed to facilitate the stabilizing control task of guiding evaders to the goal region $\mathcal{G}$. 
This nominal controller is formulated based on Sontag’s universal formula~\cite{sontag1989universal}, which provides a smooth feedback law.
The smoothness property is essential, as it ensures the continuity and well-posedness of the overall control input, enabling stable integration with the CBF-based safety constraints in the subsequent quadratic program formulation. 
Therefore, we can obtain
\begin{equation}\label{Sontag_formula}
    \mathbf{u}_{\rm{H}_{k}}^{\text{nom}}=\frac{-a+\sqrt{a^{2}+\|\mathbf{b}\|^{4}}}{\mathbf{b}\mathbf{b}^{\top}}\cdot\mathbf{b}^{\top},
\end{equation}
where $a$ and $\mathbf{b}$ correspond to the variable $a(\bf{x}_{\rm{E}_{i}},\bf{v}_{\rm{E}_{i}})$ and $\mathbf{b}(\bf{x}_{\rm{E}_{i}},\bf{v}_{\rm{E}_{i}})$ from~\eqref{Sontag_variable} for notation simplification. 
Next, to implement the safety filter for the $k$-th herder, the following optimization problem is formulated:
\begin{equation}\label{controller_kth_herder}
\begin{split}
    &\min_{\mathbf{u}_{\rm{H}_k}\in\mathbb{R}^{n}} \quad   \|\mathbf{u}_{\rm{H}_{k}}-\mathbf{u}_{\rm{H}_{k}}^{\text{nom}}\|^{2} \\
    \text{s.t.} \quad  \frac{1}{2}c_{ij} & + \mathbf{d}_{ij}\cdot\mathbf{u}_{{\rm{H}}_{k}} \geq 0, \,\,\,\forall i,j\in\mathcal{E}, i\neq j,
\end{split}
\end{equation}
where $c_{ij}$ and $\mathbf{d}_{ij}$ denote $c(\mathbf{x}_{\rm{E}_{ij}},\mathbf{v}_{\rm{E}_{ij}})$ and $\mathbf{d}(\mathbf{x}_{\rm{E}_{ij}},\mathbf{v}_{\rm{E}_{ij}})$ given in~\eqref{Sontag_CBF_Variable}, respectively. 
Notably, the algorithm assumes comprehensive knowledge of the evaders' dynamics and full observability of all evaders' positions. 
The overall architecture of our proposed herding strategy is shown in Fig.~\ref{diagram} and summarized in Algorithm~\ref{Algorithm1}.


In the optimization problem defined by~\eqref{controller_kth_herder}, we consider the scenario where the $i$-th evader avoids collisions with all the $j$-th evaders, where $i,j \in \mathcal{E}$ and $j \neq i$. 
As the number of evaders increases, the computational and sensing demands grow significantly. 
To address this, a neighborhood strategy from~\cite{wang2017safety} is employed, focusing only on pairs of nearby evaders. 
Specifically, the neighborhood set $\mathcal{N}$ is defined to include only those evaders within a predefined constant distance $D$, as expressed in the following equation:
\begin{equation}
    \mathcal{N}:=\{j\in\mathcal{E}|\|\mathbf{x}_{\rm{E}_{i}}-\mathbf{x}_{\rm{E}_{j}}\|\leq D\}.
\end{equation} 
Finally, the neighborhood strategy simplifies the optimization problem in~\eqref{controller_kth_herder} by replacing $j \in \mathcal{E}$ with $j \in \mathcal{N}$, thus restricting the consideration to only nearby evaders.

\vspace{3mm} 
\begin{Thm}\label{Summary_Theorem}
Assume that $h_1(\mathbf{x}_{\mathrm{E}_{i}})$ defined in~\eqref{CLF_Candidate} and $h_2(\mathbf{x}_{\mathrm{E}_{ij}},\mathbf{v}_{\mathrm{E}_{ij}})$ defined in~\eqref{sencond_candidate_Barrier} are continuously differentiable functions. Use Sontag's formula to design stabilizing controllers $\mathbf{u}_{\mathrm{H}_{k}}^{\text{nom}}$ for all evaders according to~\eqref{Sontag_formula} and serve as a nominal controller for a safety filter. Then we design the safety-critical controller defined in~\eqref{controller_kth_herder} to address the herding task. If $c_{ij}>0$, collision-free herding can be ensured, i.e., $\bf{x}_{\rm{E}}\in \mathcal{B}$.
\end{Thm}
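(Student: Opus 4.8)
The plan is to show that the decentralized safety filter \eqref{controller_kth_herder} renders the inter-evader safe set $\mathcal{B}$ forward invariant, by lifting the argument to the augmented state $(\mathbf{x}_{\mathrm{E}_{ij}},\mathbf{v}_{\mathrm{E}_{ij}})$ through the backstepping CBF \eqref{sencond_candidate_Barrier} and then projecting the guarantee back to position space. First I would confirm that $h_2(\mathbf{x}_{\mathrm{E}_{ij}},\mathbf{v}_{\mathrm{E}_{ij}})$ is a valid CBF for the control-affine system \eqref{Herder_dynamics_Reformulate}: since $h_2(\mathbf{x}_{\mathrm{E}_{ij}})$ in \eqref{CBF_Candidate} is a CBF for the reduced single-integrator system \eqref{Evader_dynamics_Reformulate} equipped with the smooth safe controller $\mathbf{r}_a$ of \eqref{Safe_Virtual}, the backstepping construction of~\cite{taylor2022safe} guarantees that the augmented function, whose CBF condition is precisely \eqref{CBF_Herding} (equivalently \eqref{CBF_Herding_Reformulate} after the substitutions in \eqref{Sontag_CBF_Variable}), admits an admissible input on its zero-superlevel set. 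Note the theorem's claim concerns only collision avoidance; goal reaching is carried by the Sontag nominal controller \eqref{Sontag_formula}, which enters merely as the filter's objective.

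Second, I would pin down the role of the hypothesis $c_{ij}>0$, which is exactly what makes the per-herder QP \eqref{controller_kth_herder} feasible. For any active pair $(i,j)$ the choice $\mathbf{u}_{\mathrm{H}_k}=\mathbf{0}$ yields $\tfrac12 c_{ij}+\mathbf{d}_{ij}\cdot\mathbf{0}=\tfrac12 c_{ij}>0$, so the feasible set of \eqref{controller_kth_herder} is nonempty and contains the origin; hence a minimizer exists and satisfies \eqref{Splitted_CBF_a} simultaneously for all neighbors. The symmetric QP at herder $q$ enforces \eqref{Splitted_CBF_b}, where the minus sign on the control term originates from the decomposition $\mathbf{u}_{\mathrm{H}_{kq}}=\mathbf{u}_{\mathrm{H}_k}-\mathbf{u}_{\mathrm{H}_q}$. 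Adding the two split inequalities \eqref{Splitted_CBF_a}--\eqref{Splitted_CBF_b} cancels the $\tfrac12$ weights and recovers $c_{ij}+\mathbf{d}_{ij}\cdot\mathbf{u}_{\mathrm{H}_{kq}}\ge 0$, i.e.\ exactly \eqref{CBF_Herding_Reformulate}, so the joint CBF condition \eqref{CBF_Herding} holds along the closed-loop trajectory for every pair.

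Third, with \eqref{CBF_Herding} satisfied for every $(i,j)$, I would invoke the forward-invariance property of CBFs (the set definitions in \eqref{Invariant_Set} together with~\cite{ames2016control}): provided the initial state obeys $h_2(\mathbf{x}_{\mathrm{E}_{ij}}(0),\mathbf{v}_{\mathrm{E}_{ij}}(0))\ge 0$, the augmented set $\{h_2(\mathbf{x}_{\mathrm{E}_{ij}},\mathbf{v}_{\mathrm{E}_{ij}})\ge 0\}$ remains invariant for all $t$, and since the single filtered input satisfies all pairwise constraints at once, the trajectory stays in the intersection over all pairs. The final step projects this down to the original requirement: from \eqref{sencond_candidate_Barrier},
\begin{equation*}
\|\mathbf{x}_{\mathrm{E}_{ij}}\|^{2}-R_{\text{avoid}}^{2}=h_2(\mathbf{x}_{\mathrm{E}_{ij}},\mathbf{v}_{\mathrm{E}_{ij}})+\tfrac{1}{2\mu}\|\mathbf{v}_{\mathrm{E}_{ij}}-\mathbf{r}_a(\mathbf{x}_{\mathrm{E}_{ij}})\|^{2}\ge 0,
\end{equation*}
so $h_2(\mathbf{x}_{\mathrm{E}_{ij}})\ge 0$ for every $i\ne j$, which is precisely $\mathbf{x}_{\mathrm{E}}\in\mathcal{B}$.

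The main obstacle I expect lies in making the forward-invariance step fully rigorous for the QP-defined feedback: one must argue that the minimizer of \eqref{controller_kth_herder} depends (at least locally Lipschitz) continuously on the state, so that the closed-loop ODE admits well-posed solutions along which the CBF inequality holds, and one must make explicit the implicit initial-condition assumption $h_2|_{t=0}\ge 0$. A secondary subtlety is the decentralized bookkeeping: I would need to verify that summing the per-herder half-constraints over all herder indices $k,q$ genuinely reproduces the joint condition \eqref{CBF_Herding_Reformulate} simultaneously for every active evader pair, rather than only in pairwise isolation.
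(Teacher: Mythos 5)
Your proposal is correct and follows essentially the same route as the paper's own proof: feasibility of the per-herder QP under $c_{ij}>0$, the split constraints \eqref{Splitted_CBF_Condition} being sufficient for (indeed summing to) the joint condition \eqref{CBF_Herding_Reformulate}, and hence safety of $\mathcal{B}$. Your version is in fact more complete than the paper's, which ends at ``ensuring the safety of the system'' and leaves implicit both the forward-invariance invocation (with its initial-condition and Lipschitz-regularity caveats you flag) and the projection inequality $h_2(\mathbf{x}_{\mathrm{E}_{ij}}) \geq h_2(\mathbf{x}_{\mathrm{E}_{ij}},\mathbf{v}_{\mathrm{E}_{ij}})$ that you state explicitly.
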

\begin{proof}
First, since $c_{ij} \geq 0,\, \forall i,j \in \mathcal{E}, i \neq j$ when $\mathbf{d}_{ij} = 0$ is satisfied by assumption, the optimization problem in Eq.~\eqref{controller_kth_herder} is always feasible. Then, since the resulting controller $\mathbf{u}_{\rm{H}_k}$ is the solution to the optimization~\eqref{controller_kth_herder}, the constraints $\frac{1}{2}c_{ij} + \mathbf{d}_{ij} \cdot \mathbf{u}_{{\rm{H}}_{k}} \geq 0$ are satisfied for all $i,j \in \mathcal{E}, i \neq j$. In this case, we can always ensure the satisfaction of Eq.~\eqref{CBF_Herding_Reformulate} because Eq.~\eqref{Splitted_CBF_Condition} is a sufficient condition for Eq.~\eqref{CBF_Herding_Reformulate}. Therefore, the controller $\mathbf{u}_{\rm{H}_k}$   always exists and can be designed to satisfy all constraints, ensuring the safety of the system.
\end{proof}

Note that in an environment densely populated with herders and evaders, the safety conditions in Eq.~\eqref{controller_kth_herder} may conflict, which violates the assumption \( c_{ij} \geq 0 \) for some \( i, j \) in Theorem~\ref{Summary_Theorem}. In such cases, relaxing certain constraints could compromise safety guarantees, but can still guarantee the feasibility of the optimization~\eqref{controller_kth_herder}.

\setcounter{figure}{2}
\begin{figure*}[!http] 
    \centering
    \begin{subfigure}[b]{0.32\textwidth}
        \includegraphics[width=\textwidth]{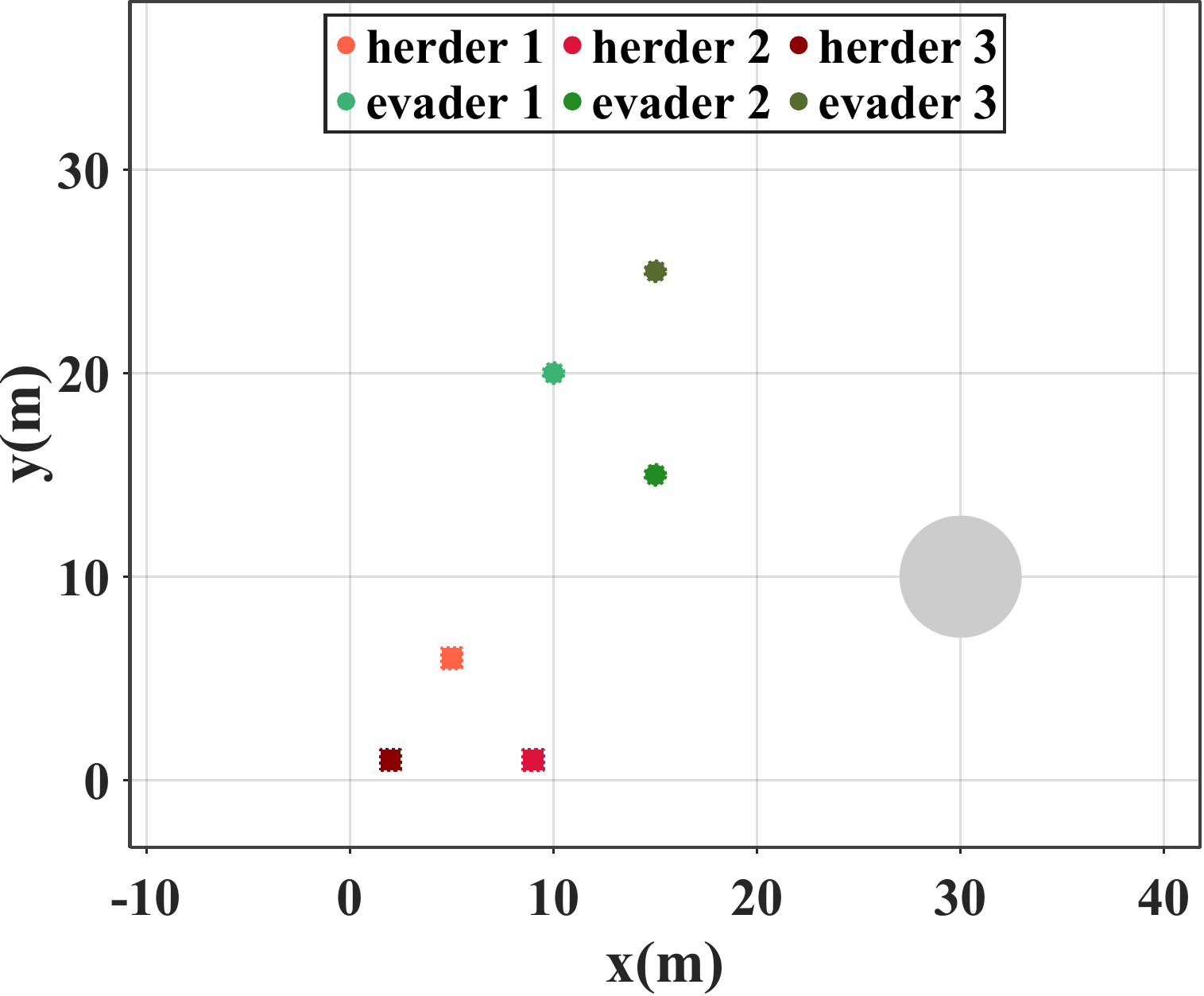}
        \caption{initial configuration}
        \label{herding_1}
    \end{subfigure}
    \hfill
    \begin{subfigure}[b]{0.32\textwidth}
        \includegraphics[width=\textwidth]{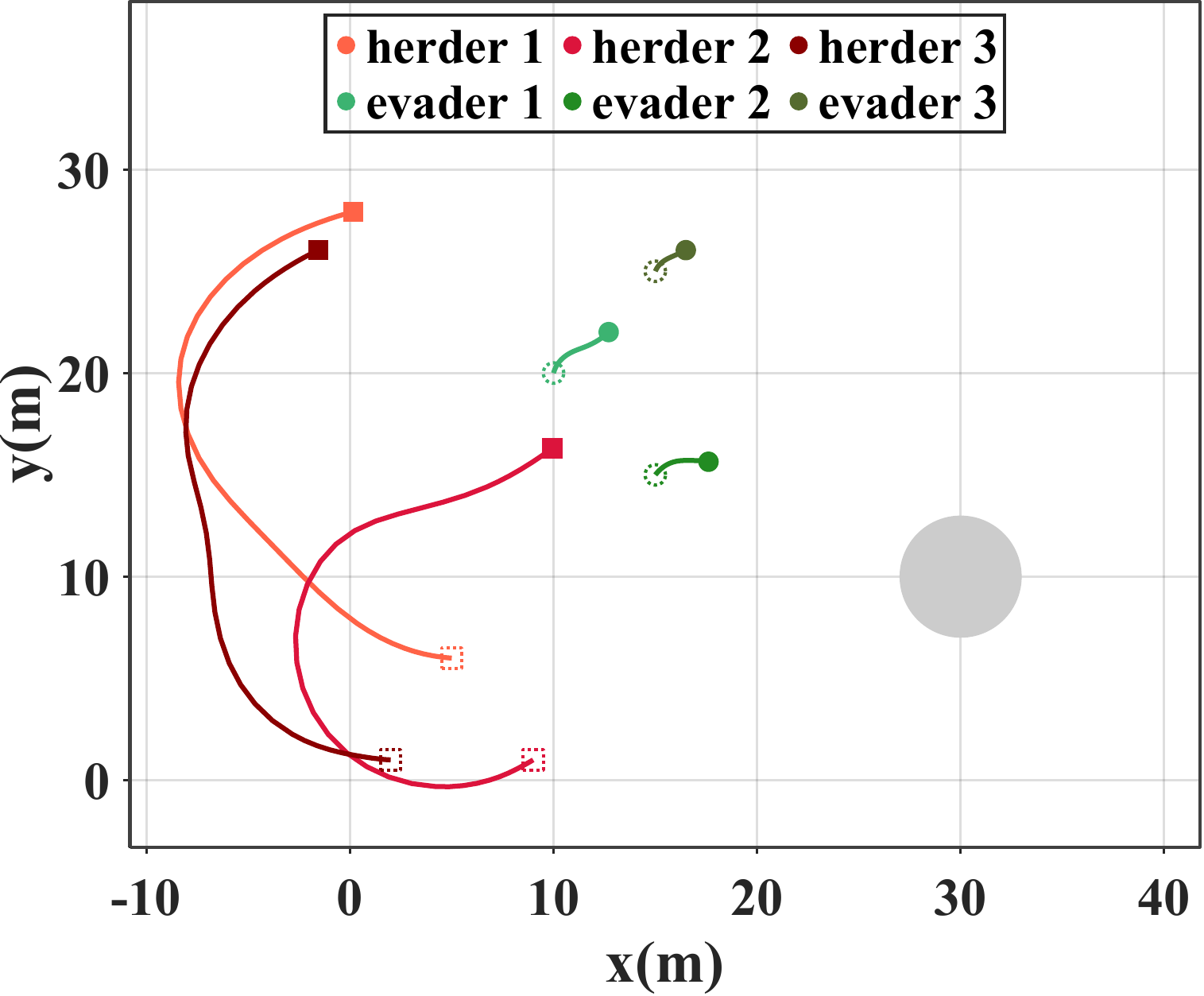}
        \caption{approaching goal region}
        \label{herding_2}
    \end{subfigure}
    \hfill
    \begin{subfigure}[b]{0.32\textwidth}
        \includegraphics[width=\textwidth]{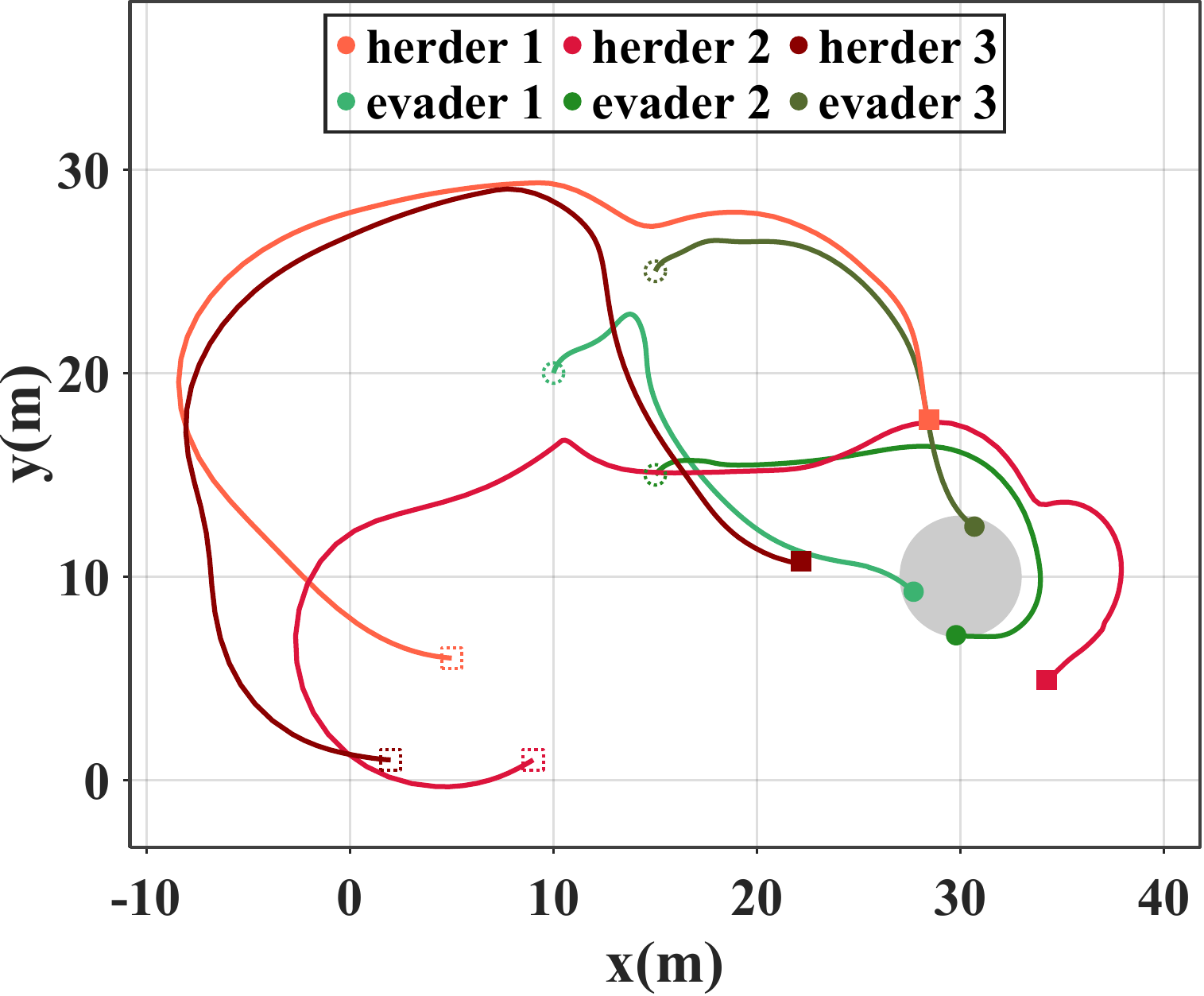}
        \caption{entered goal region}
        \label{herding_3}
    \end{subfigure}
    \hfill
    \begin{subfigure}[b]{0.32\textwidth}
        \includegraphics[width=\textwidth]{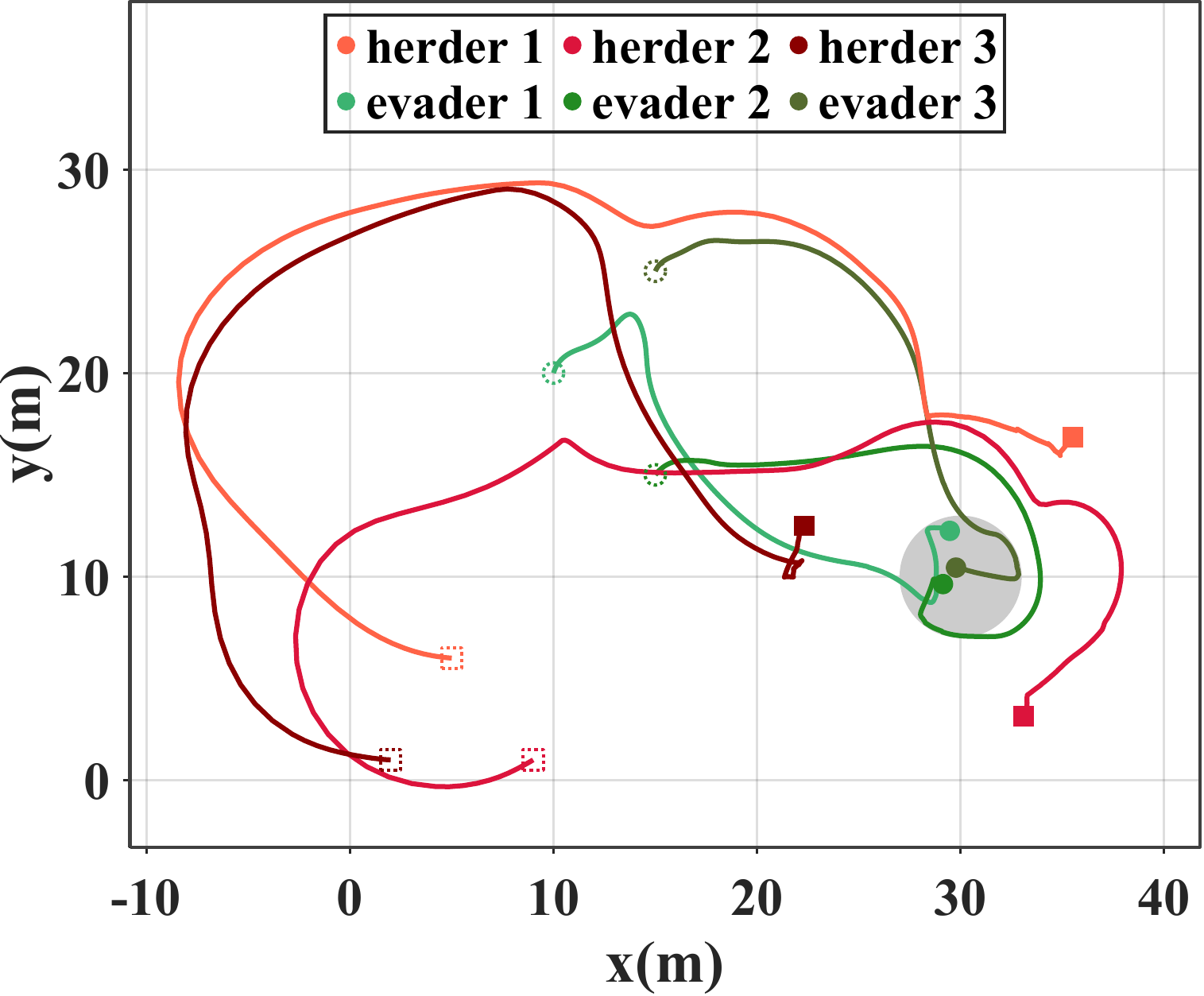}
        \caption{starting avoiding}
        \label{herding_4}
    \end{subfigure}
    \hfill
    \begin{subfigure}[b]{0.32\textwidth}
        \includegraphics[width=\textwidth]{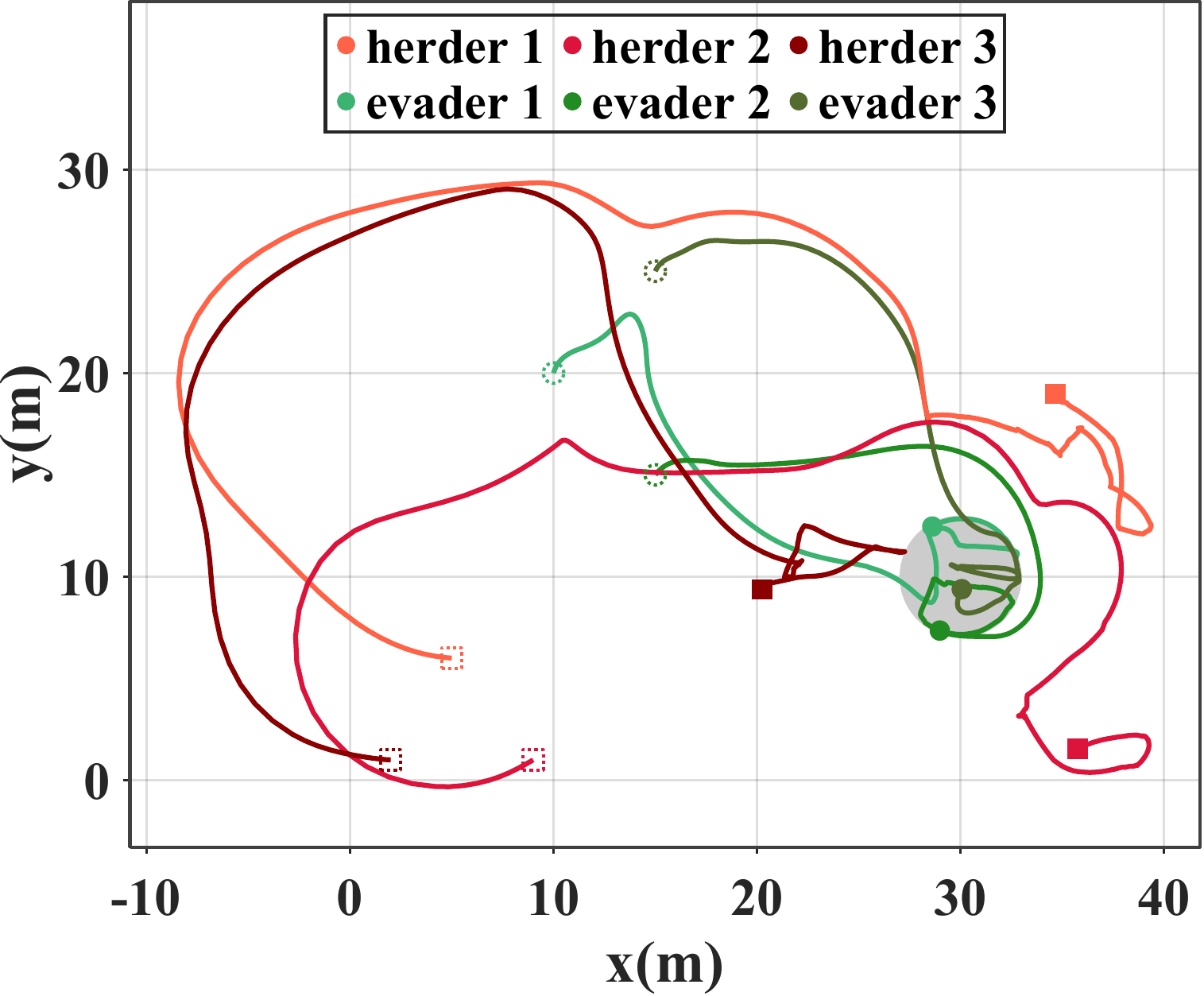}
        \caption{safe separation}
        \label{herding_5}
    \end{subfigure}
    \hfill
    \begin{subfigure}[b]{0.32\textwidth}
        \includegraphics[width=\textwidth]{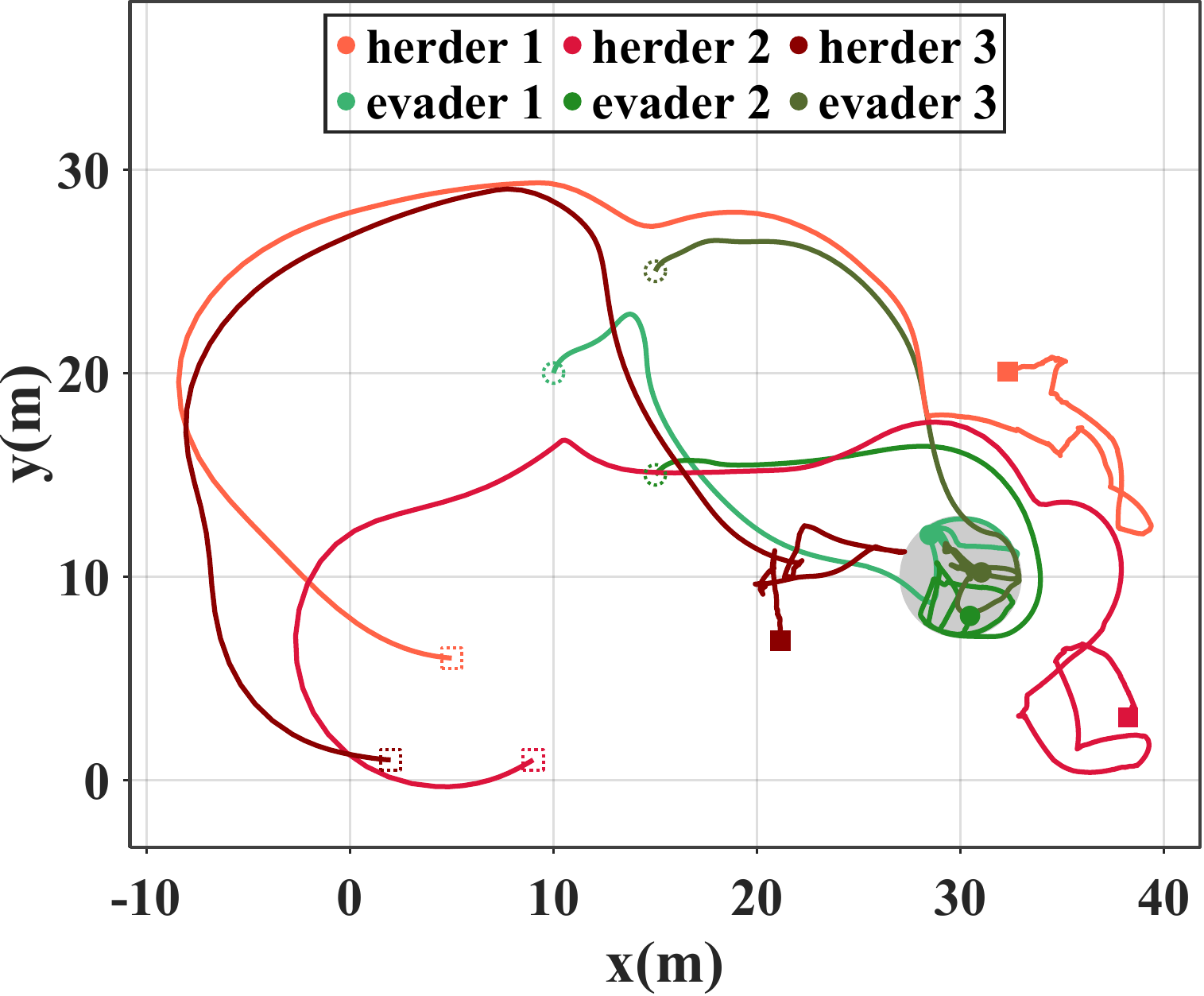}
        \caption{final configuration}
        \label{herding_6}
    \end{subfigure}
    \caption{Simulation results showing the trajectories of three herders (red squares) and three evaders (green circles) at different time instances.
    The initial position is represented with dashed squares and circles, and a gray circle represents the goal region.  }
    \label{herding_sim}
\end{figure*}

\vspace{3mm} 
\begin{Rmk}
    (Stability) The stability of the cooperative herding system with the controller $\mathbf{u}_{\rm{H}_k}$ in Eq.~\eqref{controller_kth_herder} is guaranteed when the CBF condition is inactive, i.e., when the nominal controller satisfies the CBF constraints in Eq.~\eqref{controller_kth_herder}. When stability cannot be ensured (typically due to active CBF constraints), an effective approach is to apply the safety filter in Eq.~\eqref{controller_kth_herder}, which helps preserve stability as much as possible.
\end{Rmk}

\vspace{3mm} 
\begin{Rmk}
    (Scalability) For Eq.~\eqref{controller_kth_herder}, the number of decision variables and the number of neighboring evaders, i.e., the size of $\mathcal{E}$, influence the complexity of the QP. Specifically, for Eq.~\eqref{controller_kth_herder}, the optimization problem has $2$ decision variables since $\mathbf{u}_{\mathrm{H}_{k}} \in \mathbb{R}^2$. The number of constraints depends on the number of neighboring evaders considered for the $i$th herder. In the worst case, this results in $m-1$ constraints. Therefore, the decentralized backstepping CBF is scalable to arbitrarily large groups of robots, with scalability dependent on $m$. Moreover, with the neighboring strategy, the value of $m$ can be reduced by appropriately selecting a constant neighboring distance, as discussed in~\cite{wang2017safety}.
\end{Rmk}

\vspace{3mm} 
\begin{Rmk}
    (Multiple equilibria) For Eq.~\eqref{controller_kth_herder}, the attractive force from the stabilizing controller may counteract the repulsive force from the CBF, potentially leading to the multiple-equilibria phenomenon, as discussed in~\cite{reis2020control,tan2024undesired}. While mitigating the multiple-equilibria issue is not the main focus of our paper, available strategies in~\cite{tan2024undesired} can be adopted to resolve the issue. In our implementations, we address this issue by introducing a small perturbation to the control input to effectively resolve it in practice.
\end{Rmk}

\vspace{3mm} 
Currently, the proposed algorithm focuses on the special case where the number of herders equals that of evaders. This modeling choice is not due to a fundamental limitation of the proposed framework, but rather a simplifying assumption introduced to emphasize the core technical contribution, namely, the development of a safety-critical herding control strategy based on backstepping CBFs. To facilitate controller design and analysis, we assume each herder only influences its nearest evader, thereby ensuring a one-to-one repulsion structure. 
This enables safe and effective herding without introducing the additional complexity of dynamic assignment or coordination strategies among herders. While this assumption is relatively strong, it allows us to isolate and highlight the core mechanisms of the proposed framework. Moreover, we acknowledge the importance of more general scenarios where $m\neq n$, and plan to relax this assumption in future work by incorporating dynamic coordination and multi-target assignment strategies. Ultimately, the current formulation serves as a meaningful and analytically tractable starting point for advancing effective and safe herding.

\begin{algorithm}[H]
\caption{Backstepping CBFs Herding}
\label{Algorithm1}
\begin{algorithmic}[1]
\Statex \hspace{-\algorithmicindent} \textbf{Initialization:} $\kappa_{\rm{H}}, \gamma_h, \gamma_a, k_h, k_a, \mu$
\Statex \hspace{-\algorithmicindent} for $k\in \mathcal{K}$, the $k$-th herder performs as follow:
\While{$h_1(\bf{x}_{\rm{E}_{i}})\neq 0$}
    \State  Update $\mathbf{r}_h(\mathbf{x}_{\rm{E}_{i}})$, $h_1(\mathbf{x}_{\rm{E}_{i}},  \mathbf{v}_{E_{ij}})$
    \State  Calculate $a, \mathbf{b}$
    \State Get nominal controller $\mathbf{u}_{\rm{H}_{k}}^{\text{nom}}\rightarrow \mathbf{u}_{{\rm{H}}_{k}}$
    \While{$h_2(\bf{x}_{\rm{E}_{ij}})<0$} 
        \State Update $\mathbf{r}_a(\mathbf{x}_{\rm{E}_{ij}}), h_2(\mathbf{x}_{E_{ij}}, \mathbf{v}_{E_{ij}})$
        \State Calculate $ c , \mathbf{d} $
        \State \textbf{QP}: $\min_{\mathbf{u}_{\rm{H}_{k}}} \|\mathbf{u}_{\rm{H}_{k}}-\mathbf{u}_{\rm{H}_{k}}^{\text{nom}}\|^{2}$ $\xrightarrow{\text{}}$  $\mathbf{u}_{{\rm{H}}_{k}}$
    \EndWhile
    \State Update $\mathbf{u}_{{\rm{H}}_{k}}$
\EndWhile
\end{algorithmic}
\end{algorithm}

\section{Simulation Results and Experimental Validation}\label{sec5}
To validate the effectiveness of the proposed algorithm in Section~\ref{sec4} in simultaneously goal reaching and avoiding collisions within evaders, we provide simulation results in MATLAB and further demonstrate its practical performance through real-robot experiments.

\subsection{Simulation Results}
This simulation investigates a herding scenario involving three herders and three evaders. 
The key parameters of the proposed algorithm are configured as follows: 
$\kappa_{\rm{H}}=10$, $\gamma_h=\gamma_a=0.5$, $k_h=k_a=1$, $\mu=1$, $R_{\text{avoid}}=0.5$. 
A velocity saturation with $v_{max} =3 $ m/s is enforced on each agent.
The goal region is defined as a circle centered at $\bf{x}_{\text{goal}}=(30, 10)$ with radius $R_{\text{goal}}=3.5$, which is sufficiently large to accommodate all evaders without congestion. 
Fig.~\ref{herding_sim} illustrates the trajectories of the herders and evaders at different time instances during the simulation.
 
\setcounter{figure}{3}
\begin{figure}[htp]  
    \centering
    \begin{subfigure}[b]{0.42\textwidth}
        \includegraphics[width=\textwidth]{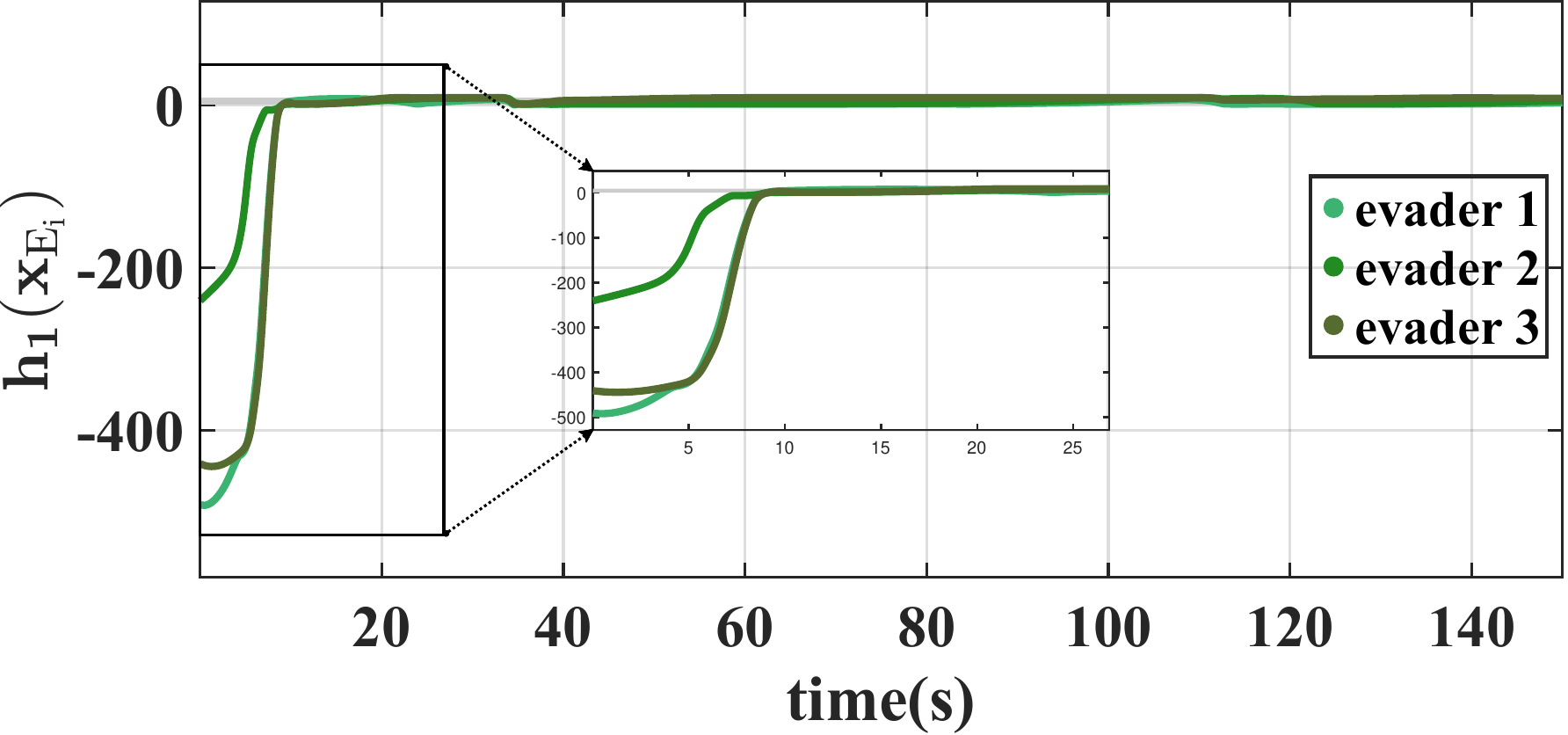}
        \caption{Evolution of the goal reaching CBF $h_1(\bf{x}_{\rm{E}_{i}}) $ values for all evaders.}
        \label{h1}
    \end{subfigure}
    \hfill
    \begin{subfigure}[b]{0.42\textwidth}
        \includegraphics[width=\textwidth]{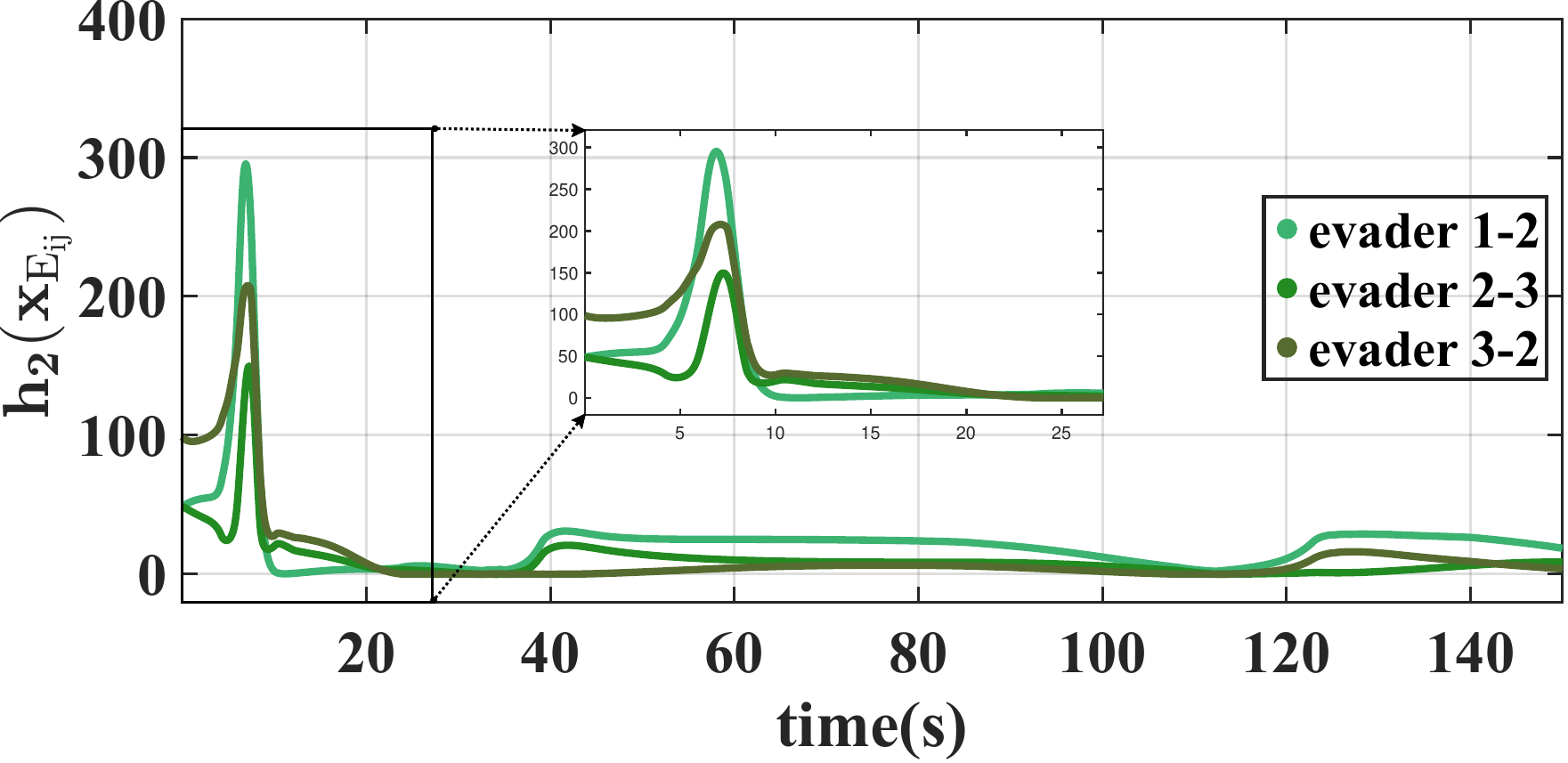}
        \caption{Evolution of the collision avoidance CBF $h_2(\bf{x}_{\rm{E}_{ij}})$ values between all pairs of neighboring evaders.}
        \label{h2}
    \end{subfigure}
     \caption{Performance metrics curves obtained from simulation results, evaluating the herding effectiveness and collision risk of evaders.}
    \label{h1_h2}
\end{figure} 
In the initial phase of herding, the robots are positioned as shown in Fig.~\ref{herding_1}. 
Subsequently, three herders take a detour and move to the positions behind the three evaders in Fig.~\ref{herding_2}, which are opposite the goal region. This alignment ensures that the repulsive force exerted by the herders on the evaders is directed toward the goal region.
As the herding continues in Fig.~\ref{herding_3}, the evaders are gradually herded into the goal region, with both the evaders' and herders' speeds decreasing accordingly. 
Since the repulsive forces of the herders persist, the evaders continue to move within the goal region. 
Notably, in Fig.~\ref{herding_4}, although the evaders are in extremely close proximity, collisions are consistently avoided throughout the process. 
The herders cooperatively maneuver around the evaders, using only the combined repulsive forces to ensure that the evaders do not collide and remain safely separated in Fig.~\ref{herding_5}. 
In the final stage in Fig.~\ref{herding_6}, when evaders risk colliding or exiting the region, the herders execute agile maneuvers, such as sharp turns and repositioning, to proactively mitigate these situations. 
Throughout the process, the herders exhibit coordinated behavior with minimal control effort, effectively maintaining the evaders within the goal region while ensuring their safety. These results validate the practicality of the proposed strategy under simulation.

Fig.~\ref{h1_h2} shows the evolution curves of two performance metrics $h_1(\bf{x}_{\rm{E}_{i}}) $ and $h_2(\bf{x}_{\rm{E}_{ij}})$, derived from the above constructed CBFs. 
These metrics are designed to quantify the success of goal reaching and the safety of collision avoidance, respectively.
As depicted in Fig.~\ref{h1}, the value of $h_1(\bf{x}_{\rm{E}_{i}}) $ increases as the distance between the evaders and the goal region decreases. 
Once all evaders successfully reach the goal region, the values stabilize within the interval $[0, R_{\text{goal}}^2]$, indicating that the goal-reaching condition has been fulfilled.
In parallel, Fig.~\ref{h2} shows the evolution of $h_2(\bf{x}_{\rm{E}_{ij}})$, which initially takes higher values due to the significant separation between the evaders. 
As the evaders are herded toward the goal region, they may tend to become more dispersed, resulting in an increase in $h_2(\mathbf{x}_{\rm{E}{ij}})$. 
Notably, when the evaders are successfully herded into and remain in close proximity to one another, the proposed algorithm effectively guarantees that $h_2(\bf{x}_{\rm{E}_{ij}})$ remains strictly positive. 
From the simulation results, all evaders are safely herded into the target region through the repulsive forces generated by the herders' controlled motion, confirming the effectiveness and safety of the algorithm in herding.

\subsection{Experimental Validation}
The experiments were conducted in a 5×5 meter indoor environment equipped with a Vicon motion capture system, which involved three herder robots and three evader robots.
The goal region was defined as a circular area centered at the origin, with a radius of 0.6 meters.
As illustrated in Fig.~\ref{fig_car}, all robots are omnidirectional platforms equipped with Mecanum wheels, enabling holonomic motion and direct velocity control. 
Their motion characteristics are well approximated by first-order integrator dynamics, consistent with the model used in the simulations.
Each robot has a physical diameter of 0.3 meters and a maximum velocity of 0.3 m/s.
Further details on the hardware design and control interfaces can be found in~\cite{ma2024omnibot}. 
\setcounter{figure}{4}
\begin{figure}[http] 
    \centering
    \includegraphics[width=0.4\textwidth]{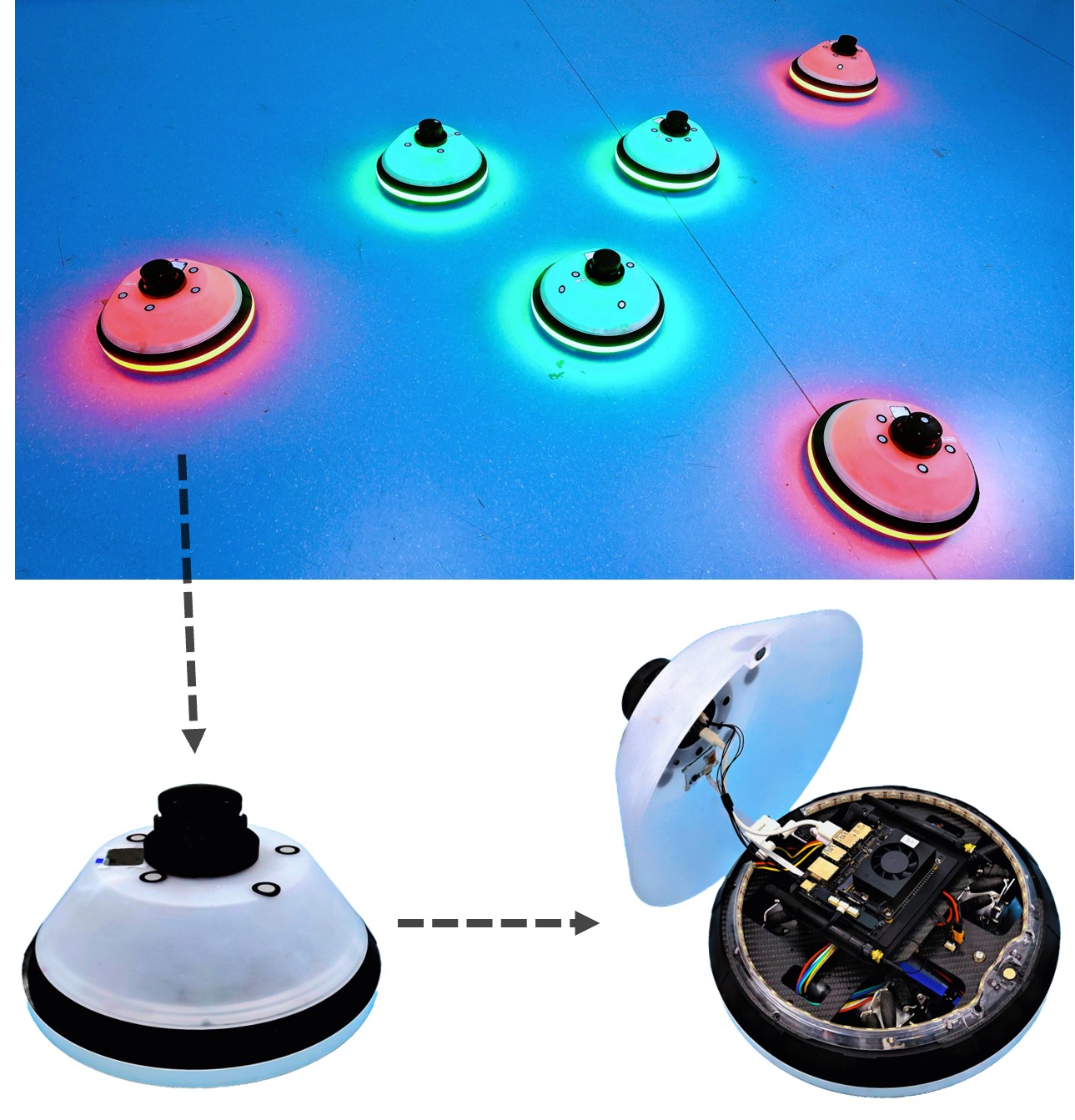}
    \caption{The omnidirectional robot used in the experiments.}
    \label{fig_car}
\end{figure}

\setcounter{figure}{6}
\begin{figure*}[!http] 
    \centering
    \begin{subfigure}[b]{0.32\textwidth}
        \includegraphics[width=\textwidth]{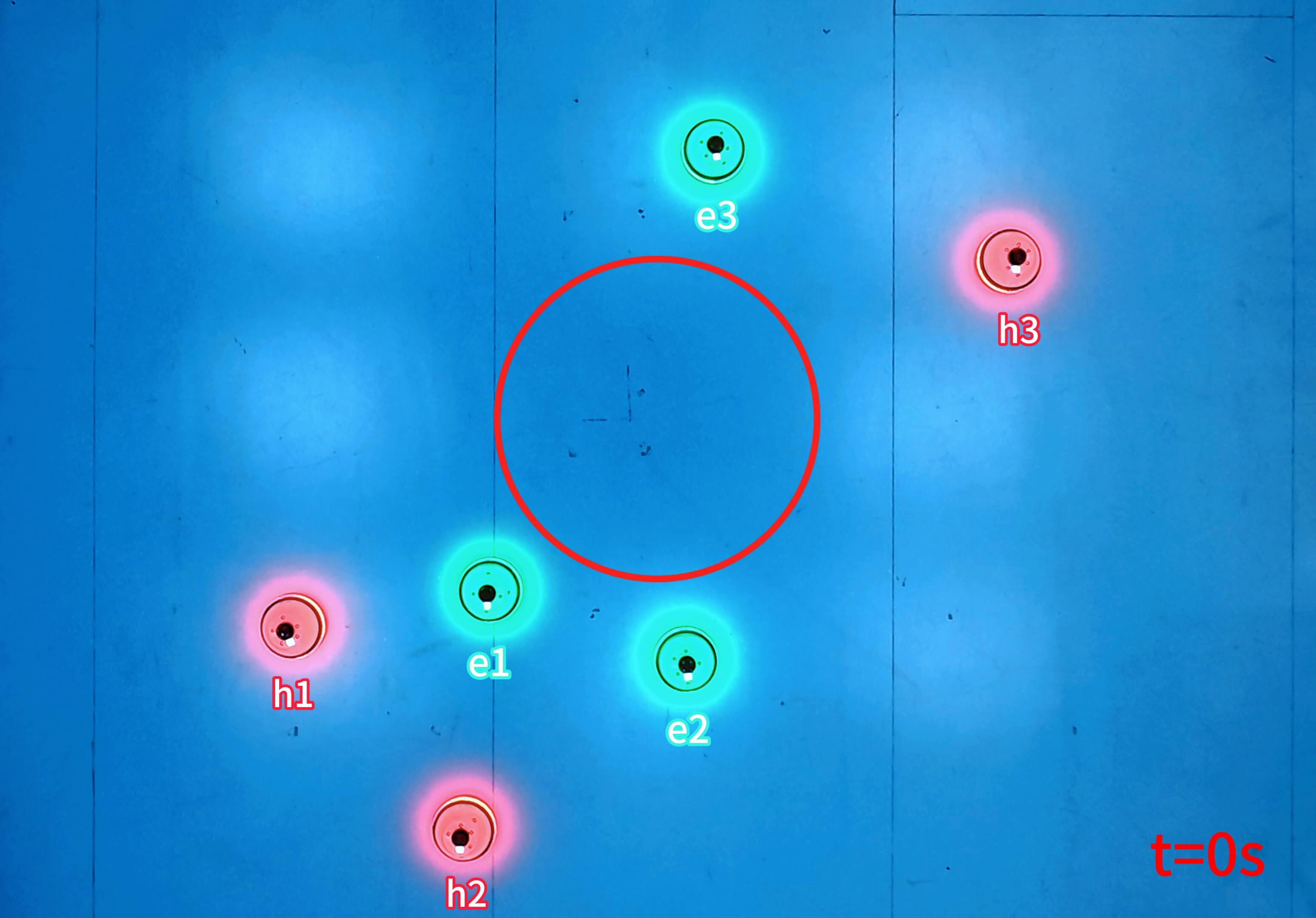}
        \caption{initial configuration}
        \label{real_1}
    \end{subfigure}
    \hfill
    \begin{subfigure}[b]{0.32\textwidth}
        \includegraphics[width=\textwidth]{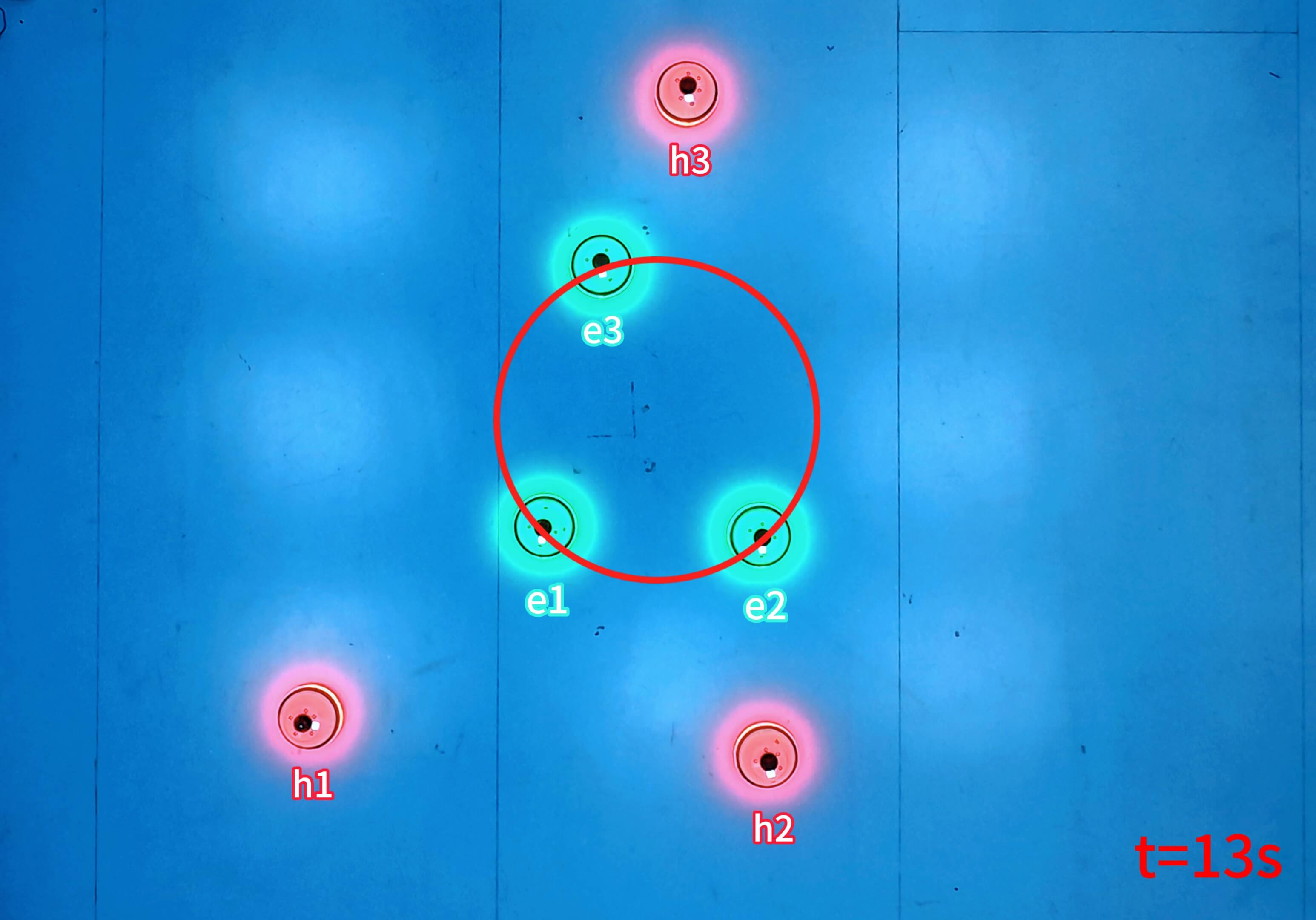}
        \caption{approaching goal region}
        \label{real_2}
    \end{subfigure}
    \hfill
    \begin{subfigure}[b]{0.32\textwidth}
        \includegraphics[width=\textwidth]{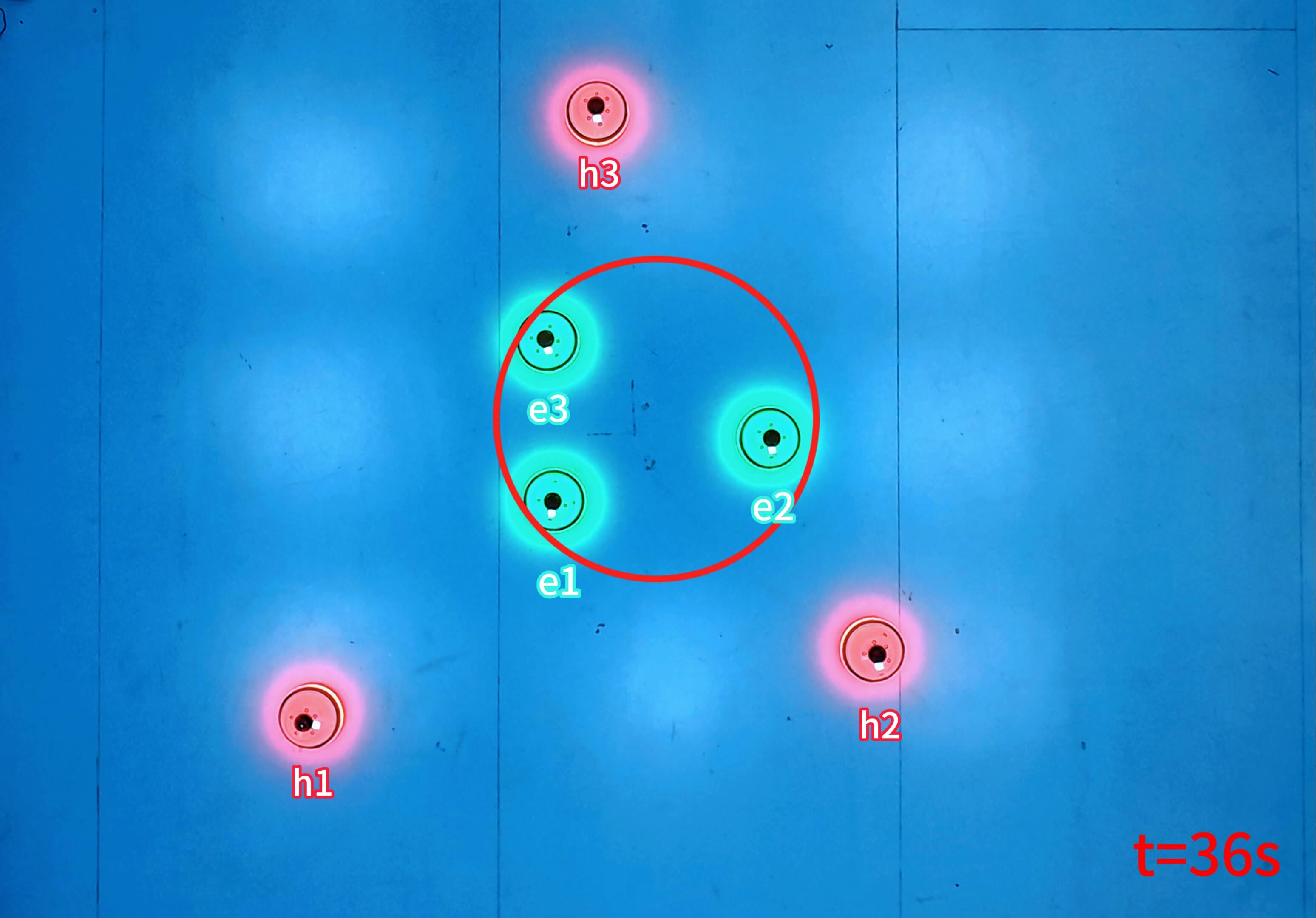}
        \caption{entered goal region}
        \label{real_3}
    \end{subfigure}
    \hfill
    \begin{subfigure}[b]{0.32\textwidth}
        \includegraphics[width=\textwidth]{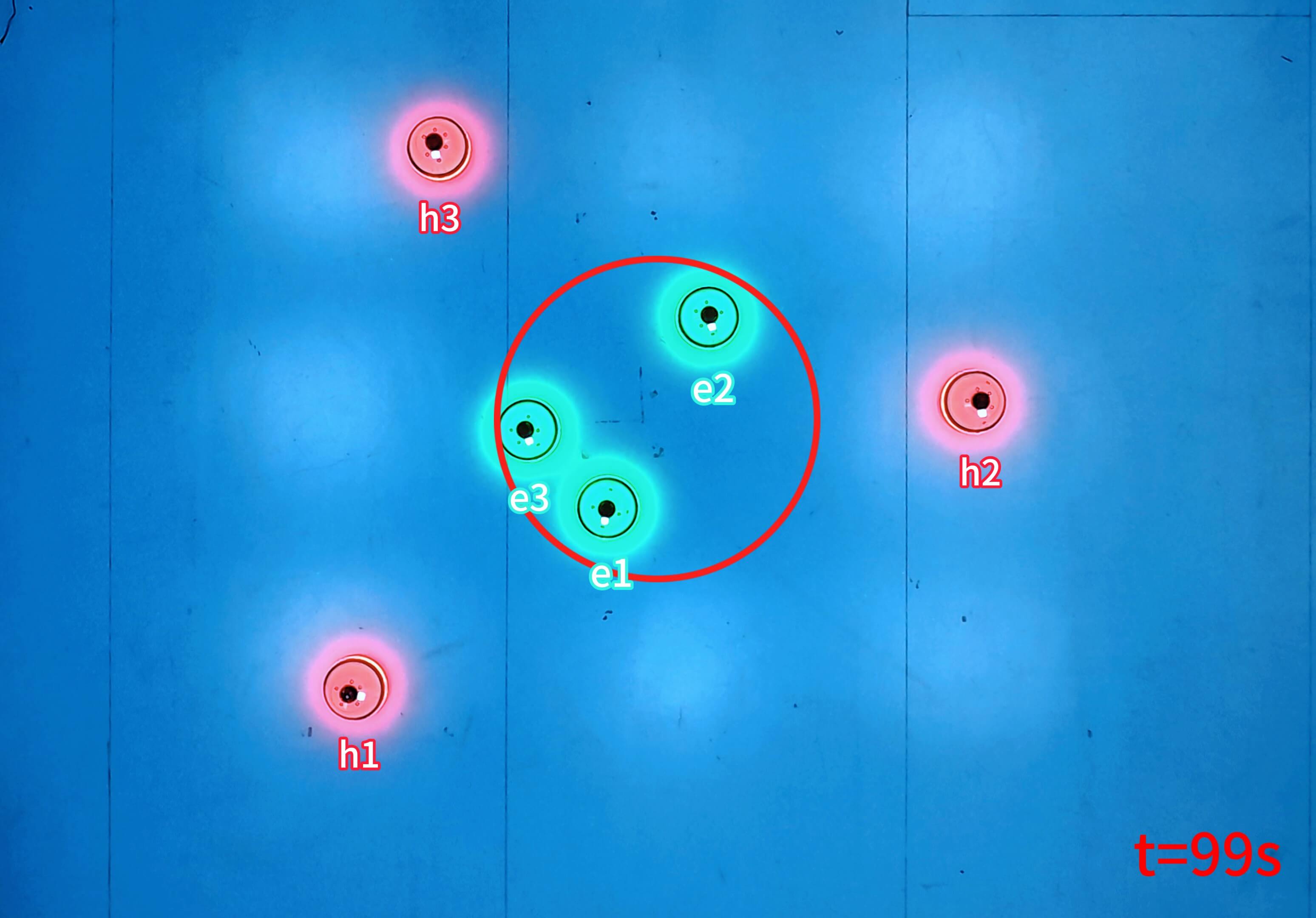}
        \caption{starting avoiding}
        \label{real_4}
    \end{subfigure}
    \hfill
    \begin{subfigure}[b]{0.32\textwidth}
        \includegraphics[width=\textwidth]{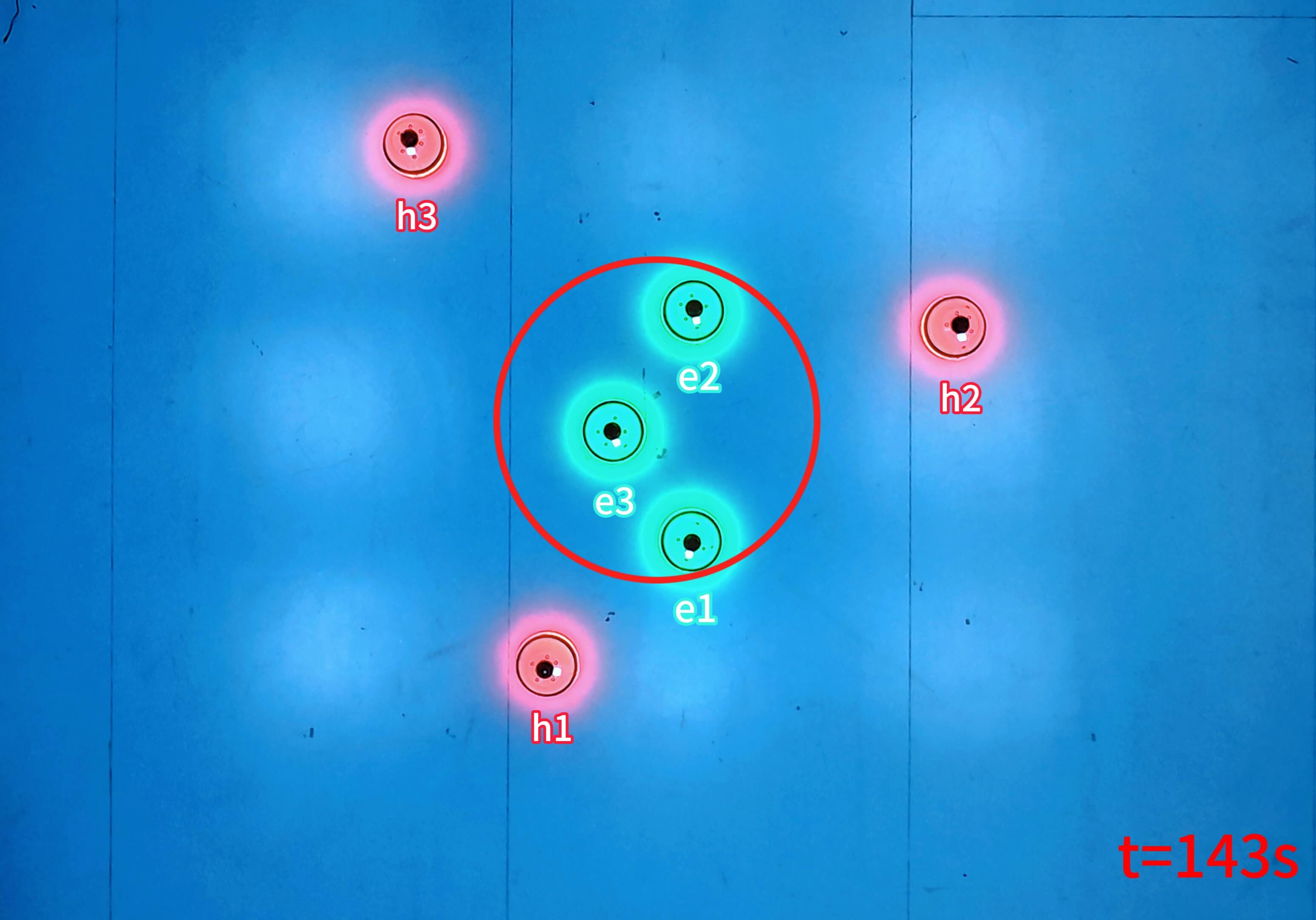}
        \caption{safe separation}
        \label{real_5}
    \end{subfigure}
    \hfill
    \begin{subfigure}[b]{0.32\textwidth}
        \includegraphics[width=\textwidth]{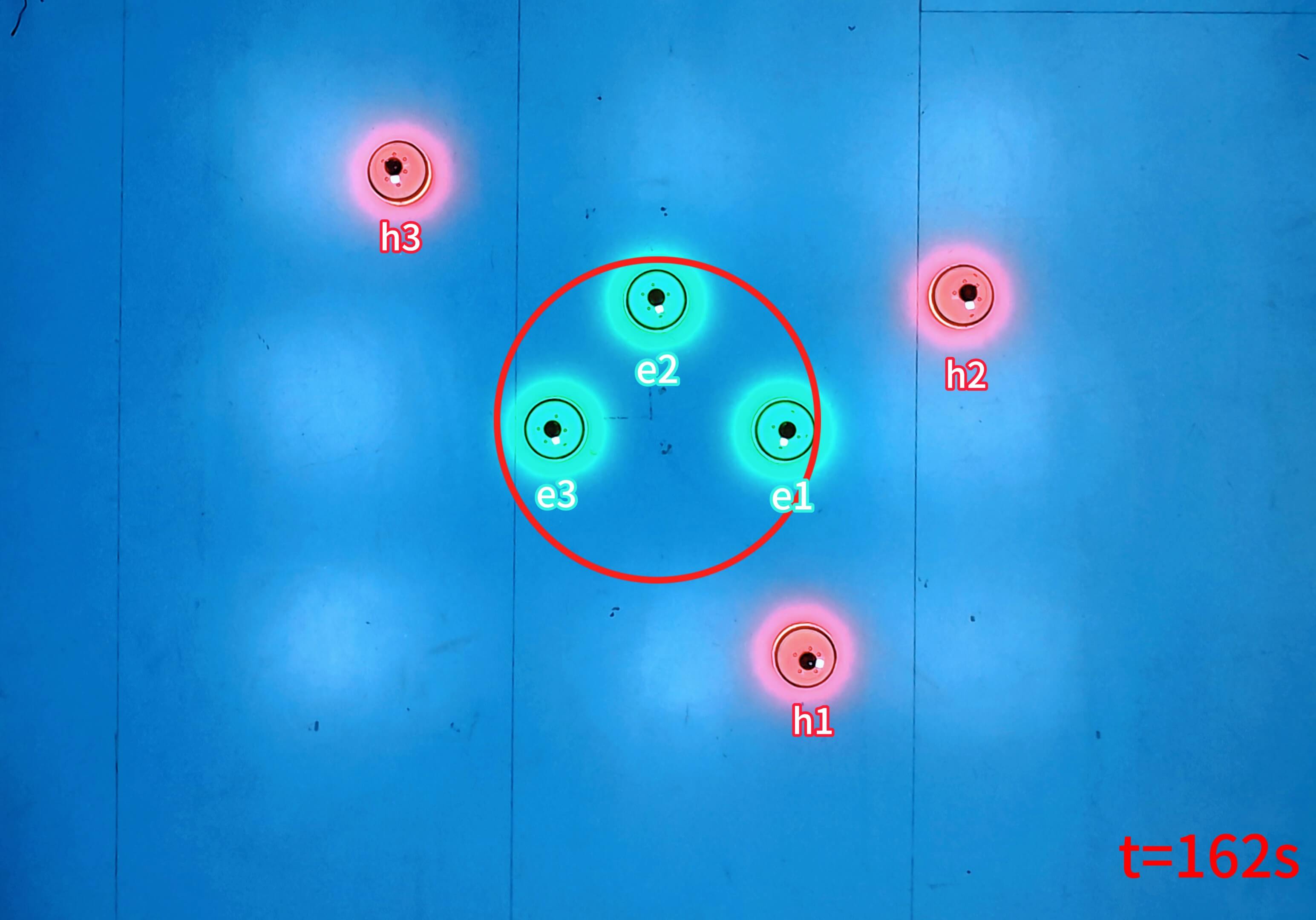}
        \caption{final configuration}
        \label{real_6}
    \end{subfigure}
    \caption{Experiment results of three herder robots herding three evader robots. The red and green robots respectively represent the herder and evader robots, and the red circular area denotes the goal region. 
    The experimental video is available at: \url{https://youtu.be/wGbxvfTbPL0}.}
    \label{herding_exp}
\end{figure*}
In our experimental setup, the user control station continuously monitors the positions of all robots in real time using a Vicon motion capture system.
The control station executes the proposed algorithm at each control cycle, which computes the required control inputs for each robot based on the most up-to-date positional data. 
These control commands are then transmitted to the robots via the Message Queuing Telemetry Transport(MQTT) protocol, with each robot executing the commands through its onboard client system.
The control cycle operates at a fixed frequency of 20 Hz. At the end of each cycle, the control station receives updated position measurements, processes the latest feedback, and updates the control inputs accordingly, seamlessly transitioning into the next cycle. 
This closed-loop system ensures continuous monitoring, real-time synchronization, and coordinated adjustment of all robot behaviors.

\setcounter{figure}{5}
\begin{figure}[http]  
    \centering
    \begin{subfigure}[b]{0.408\textwidth}
        \includegraphics[width=\textwidth]{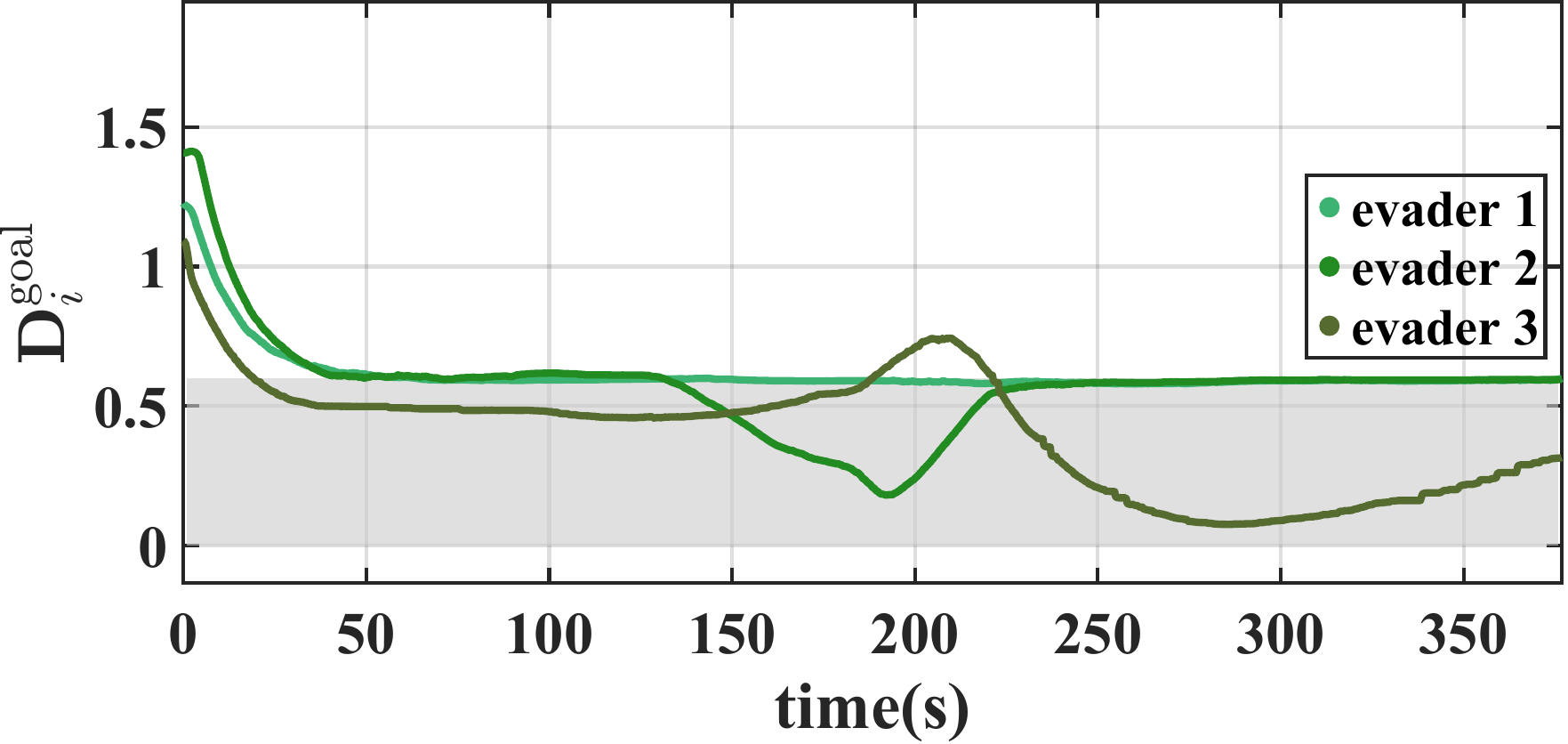}
        \caption{Distance of each evader to the goal region over time.}
        \label{D_goal}
    \end{subfigure}
    \hfill
    \begin{subfigure}[b]{0.408\textwidth}
        \includegraphics[width=\textwidth]{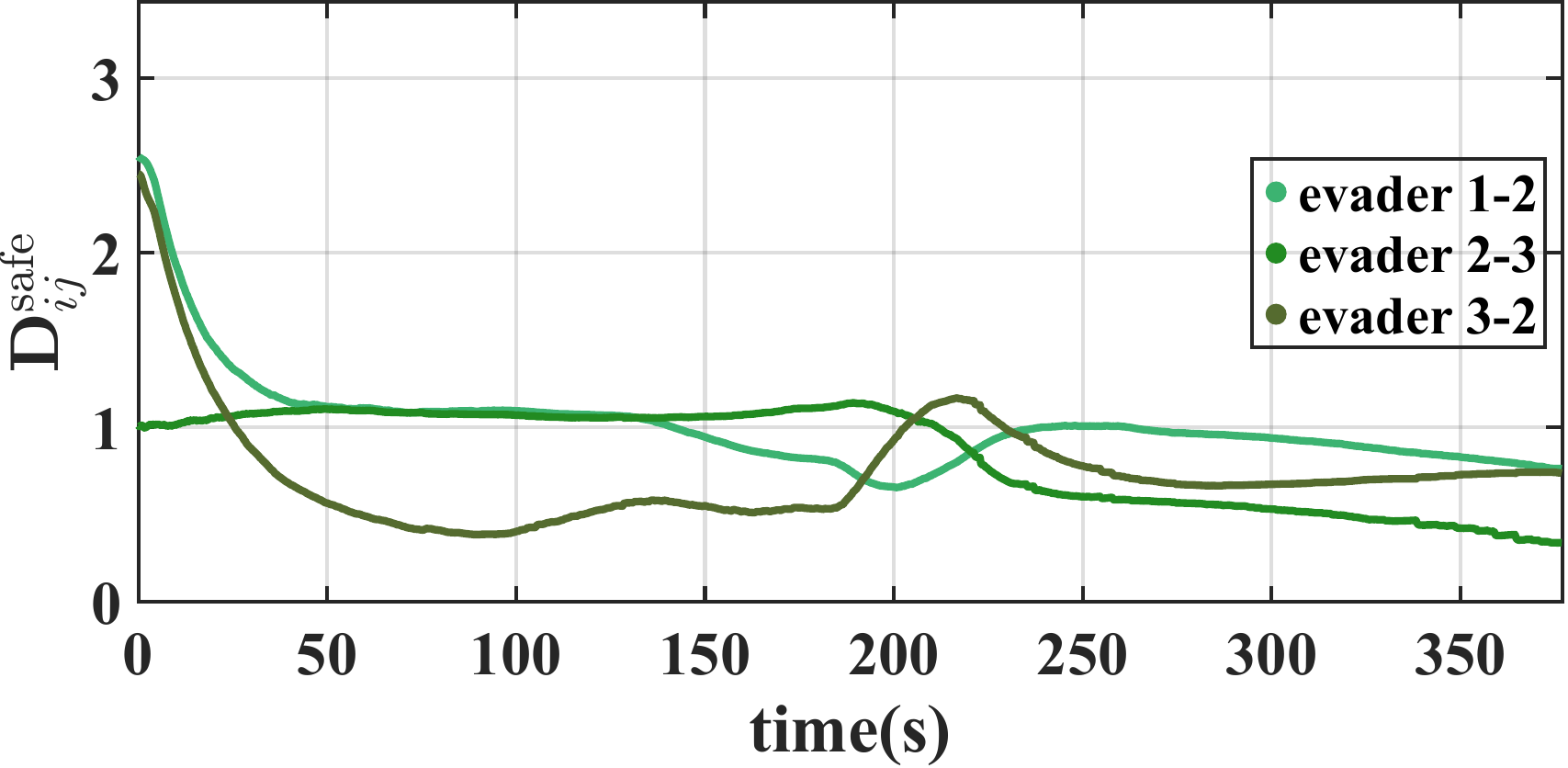}
        \caption{Pairwise distances between neighboring evaders over time.}
        \label{D_safe}
    \end{subfigure}
     \caption{Evolution of key performance metrics obtained from real-world experiments, characterizing goal reaching effectiveness and evader collision avoidance.}
    \label{D_goal_D_safe}
\end{figure}

Fig.~\ref{herding_exp} illustrates the performance of the proposed algorithm in a real-world robot experiment. 
Due to the limited physical space, the initial configurations of the herders and evaders were relatively compact compared to those in the simulation.
This spatial proximity reduced the likelihood of large reorientation maneuvers during the initial alignment phase, thereby streamlining the overall herding process.
Despite the more constrained initial setup, the experimental results clearly demonstrate that the proposed strategy successfully guides multiple evaders into the designated goal region while ensuring safe separation among them throughout the herding process. 

To quantitatively assess the real-world performance of the proposed herding strategy, we examine two key metrics: the real distance of each evader to the goal region and the pairwise distances among evaders. 
Specifically, Fig.~\ref{D_goal} presents the distance $D^{goal}_{i}$ between each evader and the goal region over time. 
As the trajectories evolve, all three evaders are successfully herded into the goal region, with their distances eventually stabilizing within the predefined experimental radius of 0.6 meters, thereby confirming successful convergence. 
Similarly, Fig.~\ref{D_safe} presents the pairwise distances $D^{safe}_{ij}$ between neighboring evaders. 
These distances consistently remain strictly positive throughout the experiment, demonstrating that the proposed strategy reliably maintains safe separation among evaders, even as they converge toward the goal region, thereby ensuring effective collision avoidance under spatial constraints. 
Overall, these results validate the effectiveness and safety of the proposed approach in simultaneously achieving the dual objectives of goal reaching and collision avoidance within evaders under the real-world conditions. 

\section{Conclusion}\label{sec6}
In this paper, we propose a multi-robot cooperative herding strategy through backstepping CBFs. 
To address the challenges arising from the underactuated nature of the herding system involving non-cooperative agents, we first reformulate the herding system dynamics into a control-affine structure compatible with the backstepping method. Then, we construct the backstepping CBFs separately for the dual objectives of goal reaching and collision avoidance. 
Finally, we provide both centralized and decentralized cooperative controllers to implement the proposed herding strategy. 

While the proposed method demonstrates promising performance, it still exhibits certain limitations.
For instance, it assumes full observability of evader positions and accurate modeling of their dynamics, which may not hold in real-world scenarios. Model uncertainties and partial observability could degrade system performance, underscoring the importance of incorporating robust control strategies, adaptive mechanisms, and state estimation techniques.
Moreover, real-world deployment necessitates onboard sensing and autonomous localization capabilities. It is essential to integrate perception modules, such as cameras or LiDAR, into both herder and evader robots to enable real-time, decentralized operation without relying on external infrastructure.
These limitations point to several promising directions for future research, alongside extensions to more complex scenarios, including situations where the number of herders exceeds that of evaders or where the herders have lower mobility than the evaders.

\backmatter

\bmhead{Author contribution}
Kang Li and Ming Li made substantial contributions to the conception and design of the work and wrote the main manuscript text. Kang Li conducted the primary simulations and experiments and provided the curve analysis figures. Wenkang Ji assisted Kang Li with the real-robot experiments and organized the references. Shiyu Zhao and Zhiyong Sun critically reviewed the manuscript for important intellectual content and assume collective responsibility for ensuring the accuracy and integrity of the work.

\bmhead{Funding}
This work was supported by the National Natural Science Foundation of China (Grant No. 62473320).

\bmhead{Data availability}
No datasets were generated or analyzed during the current study.

\bmhead{Declarations}

\bmhead{Conflict of interest}
The authors declare that they have no conflict of interest.

\bibliography{herding_ref} 
\end{document}